\newcommand{\ket}[1]{\mbox{$ | #1 \rangle $}}
\newcommand{\bra}[1]{\mbox{$ \langle #1 | $}}
\newcommand{\tr}{\mathrm{tr}}
\newcommand{\cP}{{\cal P}}
\newcommand{\cB}{\mathcal{B}}
\newcommand{\cS}{\mathcal{S}}
\newcommand{\cV}{\mathcal{V}}
\newcommand{\I}{\mathrm{i}}
\newcommand{\D}{\mathrm{d}}
\definecolor{light-gray}{gray}{0.95}
\newtheoremstyle{note}      
  {\topsep/2}              	
  {\topsep/2}            	
  {}                        
  {\parindent}             	
  {\itshape}                
  {.---}                    
  {0pt}                     
  {\thmname{#1}\thmnumber{ \itshape#2}\thmnote{ (#3)}} 
\newtheorem{theorem}{Theorem}
\newtheorem{corollary}{Corollary}
\theoremstyle{definition}
\theoremstyle{remark}
\begin{document}
\title{Efficient verification of arbitrary entangled states with homogeneous local measurements}

\author{Ye-Chao Liu}
\affiliation{Key Laboratory of Advanced Optoelectronic Quantum Architecture and Measurement of Ministry of Education, School of Physics, Beijing Institute of Technology, Beijing 100081, China}
\affiliation{Naturwissenschaftlich-Technische Fakult{\"a}t, Universit{\"a}t Siegen, Walter-Flex-Stra{\ss}e 3, 57068 Siegen, Germany}

\author{Yinfei Li}
\affiliation{Key Laboratory of Advanced Optoelectronic Quantum Architecture and Measurement of Ministry of Education, School of Physics, Beijing Institute of Technology, Beijing 100081, China}

\author{Jiangwei Shang}
\email{jiangwei.shang@bit.edu.cn}
\affiliation{Key Laboratory of Advanced Optoelectronic Quantum Architecture and Measurement of Ministry of Education, School of Physics, Beijing Institute of Technology, Beijing 100081, China}
\affiliation{State Key Laboratory of Surface Physics and Department of Physics, Fudan University, Shanghai 200433, China}

\author{Xiangdong Zhang}
\email{zhangxd@bit.edu.cn}
\affiliation{Key Laboratory of Advanced Optoelectronic Quantum Architecture and Measurement of Ministry of Education, School of Physics, Beijing Institute of Technology, Beijing 100081, China}

\date{\today}
%

\begin{abstract}
Quantum state verification (QSV) is the task of relying on local measurements only to verify that a given quantum device does produce the desired target state.
Up to now, certain types of entangled states can be verified efficiently or even optimally by QSV.
However, given an arbitrary entangled state, how to design its verification protocol remains an open problem.
In this work, we present a systematic strategy to tackle this problem by considering the locality of what we initiate as the choice-independent measurement protocols, whose operators can be directly achieved when they are homogeneous. 
Taking several typical entangled states as examples, we show the explicit procedures of the protocol design using standard Pauli projections, demonstrating the superiority of our method for attaining better QSV strategies.
Moreover, our framework can be naturally extended to other tasks such as the construction of entanglement witness, and even parameter estimation.
\end{abstract}

\maketitle

\section{Introduction}
With the rapid development of quantum techniques, numerous applications are moving towards practicality, such as quantum computing \cite{Google2019Sycamore,USTC2020Jiuzhang,USTC2021zuchongzhi}, quantum communication \cite{Cozzolino.etal2019,Bhaskar.etal2020}, and quantum metrology \cite{Pezze.etal2018,Polino.etal2020}. 
Meanwhile, a fundamental task in all of these applications, i.e., quantum characterization is becoming more and more crucial alongside. 
The standard tool of quantum tomography \cite{QSE2004} is powerful, but the exponential increasing of quantum resource and data processing time \cite{Haffner.etal2005, Shang.etal2017} makes the characterization of large quantum systems extremely tricky.
Hence, attention has been turned to nontomographic methods \cite{Mayers.etal2004,Toth.Guehne2005c,Guehne.Toth2009,Flammia.Liu2011,Dimic.Dakic2018}, among which quantum state verification (QSV) \cite{Pallister.etal2018} particularly stands out due to its unconditionally high efficiency, which is at least quadratically better than other methods. 
In the last few years, certain types of entangled quantum states and processes have been verified efficiently or even optimally using local measurements only \cite{Morimae.etal2017, Takeuchi.Morimae2018, Yu.etal2019, Li.etal2019, Wang.Hayashi2019, Zhu.Hayashi2019a, Zhu.Hayashi2019b, Zhu.Hayashi2019c, Zhu.Hayashi2019d, Liu.etal2019b, Li.etal2020b, Dangniam.etal2020, Zhang.etal2020a, Jiang.etal2020, Zhang.etal2020b, Li.etal2020a, Liu.etal2020b,Liu.etal2021, Han.etal2021, Zhu.etal2022, ChenHuang.etal2023, Li.Y.etal2021, Liu.etal2020a, Zhu.Zhang2020, Zeng.etal2020, Zhang.etal2022}; see Ref.~\cite{Yu.etal2022} for a recent review.

The task of QSV is to verify a quantum device which is supposed to produce the target state $\ket{\psi}$, but in fact states $\varrho_1,\varrho_2,\cdots,\varrho_N$ might be emitted. 
To accomplish this mission, one can randomly perform some pass-or-fail tests $\{\Omega_i,\openone-\Omega_i\}$, which detect any bad state $\varrho_j$ with fidelity ${\bra{\psi}\varrho_j\ket{\psi} \leq 1-\epsilon}$. 
Then, a verification protocol can be constructed as ${\Omega = \sum_i \mu_i\Omega_i}$, where $\{\mu_i\}$ denotes a probability distribution. 
Importantly, these tests should be suitably designed such that the target state can always pass, i.e., ${\Omega_i\ket{\psi}=\ket{\psi}}, \forall i$.
The maximal probability that a bad state passes the protocol is ${1-\epsilon \nu(\Omega)}$, where ${\nu(\Omega):=1-\lambda_2(\Omega)}$ denotes the spectral gap between the largest and the second largest eigenvalues of $\Omega$ \cite{Pallister.etal2018, Zhu.Hayashi2019c}. 
Thus, in order to gain a confidence level ${1-\gamma}$, the protocol $\Omega$ requires
\begin{equation}\label{eq:QSVparameter}
  N\geq\frac{\ln\gamma^{-1}}{\ln\bigl\{[1-\nu(\Omega)\epsilon]^{-1}\bigr\}}\approx
  \frac1{\nu(\Omega)}\epsilon^{-1}\ln\gamma^{-1}\,
\end{equation}
copies of the state to verify $\ket{\psi}$ within infidelity $\epsilon$. 

Similarly, other characterization methods including tomography can also be performed probabilistically with the measurement settings following a probability distribution.
The unknown state can then be reconstructed by tomography once the expectation values of all the measurement settings (or observables) are obtained, the success of which demands that the measurement outcomes and the settings must be matched. 
QSV, instead, owns a distinct feature that the measurement protocol can be regarded as a black box as we only need to know the numbers of ``pass'' and ``fail'' outcomes.
The one-to-one correspondence between the measurement outcomes and the settings is not necessary in QSV.
Hence, we dub such measurement protocols as being \textit{choice-independent}.

\begin{figure*}[t]
  \includegraphics[width=0.8\textwidth]{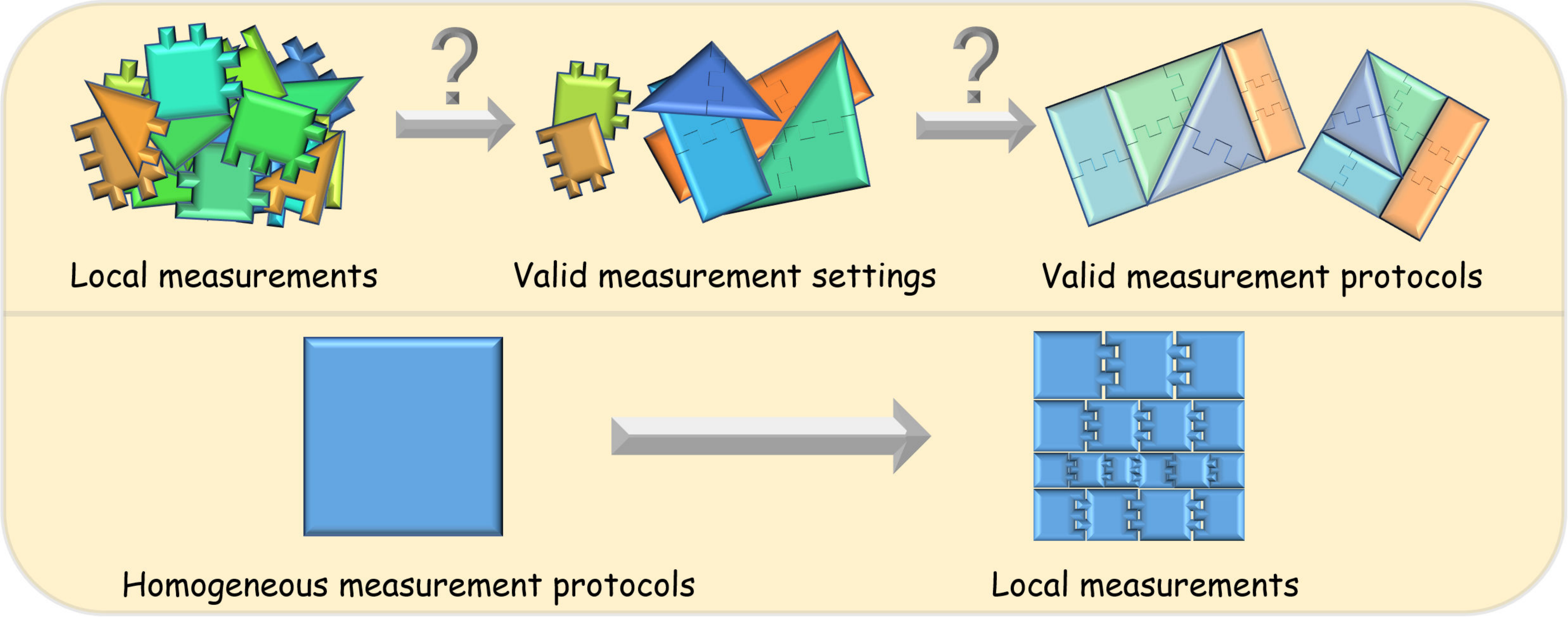}
  \caption{
    Schematic illustration of the protocol design. Instead of searching for valid measurement settings then protocols among numerous possible local measurements, one can start with homogeneous measurement protocols directly by checking whether they can be local or not.
    }
  \label{fig:scheme}
\end{figure*}

In general, constructing an efficient verification protocol for a target state with only local measurements is difficult.
Nevertheless, as the verification protocols are choice-independent, it is likely to start from an overall perspective of the verification protocol $\Omega$. 
In this way, the problem can be converted to check whether $\Omega$ can be realized locally or not. 
Coincidentally, in the study of QSV, the so-called \textit{homogeneous} protocols \cite{Zhu.Hayashi2019c,Zhu.Hayashi2019d} emerge, of which $\Omega$ can be directly written down. 
The structure of homogeneous verification protocols is highly symmetric such that they have the best performance in the adversarial scenario \cite{Zhu.Hayashi2019c,Zhu.Hayashi2019d}. 
Indeed, several types of entangled states have already been efficiently verified by local homogeneous measurements \cite{Pallister.etal2018,Li.etal2019,Zhu.Hayashi2019a,Li.etal2020b,Li.etal2020a}.
Therefore, the equivalence between protocol design and locality of homogeneous measurements offers us a possible way out for the open problem of verifying arbitrary entangled states; see 
Fig.~\ref{fig:scheme}. 
In fact, such an idea appears in several other problems including the construction of entanglement witness \cite{Bourennane.etal2004} and parameter estimation \cite{Brandao.etal2005,Li.etal2020b}, whose protocols can also be choice-independent.

In this work, we propose a systematic strategy to design QSV protocols for arbitrary entangled pure states. 
The main idea is to convert this problem to the checking of the locality of homogeneous protocols. 
First, we formalize the locality of a measurement protocol and answer the basic question of whether it is local. 
Focusing on local measurements, we derive the corresponding constraints explicitly.
Then, the more general scenario of using local operations and classical communication (LOCC) as well as the case of infinite continuous measurements are discussed respectively.
Next, for the homogeneous QSV protocols whose operators can be directly achieved for arbitrary states, we acquire the corresponding constraints for them being local.
Specifically, we demonstrate the explicit procedures of the protocol design using Pauli projections, for verifying Bell states, stabilizer states (GHZ states in particular) and $W$ states. 
For certain cases, our method attains the best strategies to date.
Finally, our framework is successfully extended to other tasks including the construction of entanglement witness and parameter estimation.

\section{Locality of measurement protocols}
From a more general perspective, consider an arbitrary measurement protocol $\Pi$, which can be decomposed into
\begin{eqnarray}\label{eq:protocol}
  \Pi=\sum_i \mu_i \Pi_i\,,
\end{eqnarray}
where $\Pi_i$s are individual measurement settings and $\{\mu_i\}$ is a probability distribution. 
We would like to find out under what circumstances $\Pi$ is local.
Without loss of generality, assume that we can realize an ensemble of $s$-outcome positive operator-valued measures (POVMs) ${\{M_{i}^1,\cdots,M_{i}^s\}_{i}}$, where ${\sum_{j=1}^s M_{i}^j=\openone}$.
For an $n$-partite quantum system, the protocol $\Pi$ is called local if all the measurement settings are local, such that
\begin{eqnarray}\label{eq:setting}
  \Pi_{i=i_1\cdots i_n}=\sum_{j} z_{i,j}M_{i_1}^{j_1}\otimes\cdots\otimes M_{i_n}^{j_n}\,,\quad
\end{eqnarray}
where the sum is taken over ${j\!=\!j_1\cdots j_n\!\in\!\{1,\cdots,s\}^{\otimes n}}$, and the parameters $z_{i,j}$ are either $0$ or $1$ that tell us which outcomes ${j_1\cdots j_n}$ of the measurement setting $\Pi_i$ correspond to the ``pass'' instances. 
More generally, we can let ${z_{i,j}\in[0,1]}$ if some outcomes are allowed to pass the test with probability ${0<z_{i,j}<1}$. 

By combining all the measurement settings, we get the decomposition of a measurement protocol with the form
\begin{eqnarray}\label{eq:sum}
  \Pi = \sum_{i,j} p_{i,j} M_{i_1}^{j_1} \otimes \cdots \otimes M_{i_n}^{j_n}\,,
\end{eqnarray}
where ${p_{i,j}(\Pi):=\mu_i z_{i,j}}$ is called the quasi-probability distribution, as $\sum_{i,j} p_{i,j}(\Pi) $ is typically not equal to $1$.
With this, we have the following theorem for the locality of measurement protocols.
\begin{theorem}\label{thm:measure_d}
A measurement protocol is local \emph{iff} the quasi-probability distribution $p_{i,j}$ satisfies the following two constraints under the representation of Eq.~\eqref{eq:sum}, 
\begin{align}
    &\bullet{\bf Positivity}&&\min_{i,j}\bigl\{p_{i,j}(\Pi)\bigr\} \geq 0\,,\label{eq:pos_d}\\
    &\bullet{\bf Completeness}&&S(\Pi):= \sum_{i} \max_{j} \bigl\{p_{i,j}(\Pi)\bigr\} \leq 1\,.\label{eq:cop_d}
\end{align}
\end{theorem}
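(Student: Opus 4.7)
The plan is to prove the two implications separately, with the reverse direction requiring an explicit reconstruction of a probability distribution $\{\mu_i\}$ and selection coefficients $\{z_{i,j}\}$ from the quasi-probabilities $\{p_{i,j}\}$.

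For the forward direction, suppose $\Pi$ is local, so that there exists a decomposition of the form $\Pi=\sum_i\mu_i\Pi_i$ with each $\Pi_i$ as in Eq.~\eqref{eq:setting}, yielding $p_{i,j}(\Pi)=\mu_iz_{i,j}$. Positivity is then immediate from $\mu_i\geq 0$ and $z_{i,j}\in[0,1]$. For completeness, I would use $\max_j p_{i,j}=\mu_i\max_j z_{i,j}\leq\mu_i$, so that summing over $i$ gives $S(\Pi)\leq\sum_i\mu_i=1$.

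For the reverse direction, given $p_{i,j}\geq 0$ satisfying $S(\Pi)\leq 1$, I would construct the required decomposition explicitly. Define $\mu_i:=\max_j p_{i,j}$, and, whenever $\mu_i>0$, set $z_{i,j}:=p_{i,j}/\mu_i\in[0,1]$. If $\mu_i=0$, the setting $\Pi_i$ contributes nothing and can be dropped. By completeness, $\sum_i\mu_i\leq 1$. If the inequality is strict, I would append a trivial dummy setting, e.g.\ assigning the remaining weight $1-\sum_i\mu_i$ to an arbitrary local POVM with all $z$-coefficients set to zero (so that this setting contributes $0$ to $\Pi$ but restores the probability distribution to unit total). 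The resulting $\{\mu_i\}$ is a valid probability distribution and each $\Pi_i=\sum_j z_{i,j} M_{i_1}^{j_1}\otimes\cdots\otimes M_{i_n}^{j_n}$ is a local measurement setting, so $\Pi$ is local.

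The only subtle point is the choice $\mu_i=\max_j p_{i,j}$: this is the smallest $\mu_i$ that keeps all $z_{i,j}\leq 1$, and it is precisely what makes the completeness condition \eqref{eq:cop_d} both necessary and sufficient. The main obstacle, if any, is conceptual rather than technical, namely recognizing that the decomposition \eqref{eq:sum} into the local POVM product basis is not unique, so one must be careful to check that the constraints are stated with respect to a fixed representation induced by the available local POVMs $\{M_i^j\}$. Once this representation is fixed, the construction above gives a canonical local realization, completing the equivalence.
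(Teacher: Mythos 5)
Your proof is correct, and the forward direction coincides with the paper's own argument: identify $p_{i,j}=\mu_i z_{i,j}$, use $\mu_i\geq 0$ and $z_{i,j}\in[0,1]$ for positivity, and $\max_j p_{i,j}\leq \mu_i$ summed against $\sum_i\mu_i=1$ for completeness. Where you go beyond the paper is the converse: the published proof consists of a single sentence asserting that the two constraints follow "straightforwardly" from the form $p_{i,j}=\mu_i z_{i,j}$, which really only establishes necessity; sufficiency requires exactly the reconstruction you give, namely $\mu_i:=\max_j p_{i,j}$, $z_{i,j}:=p_{i,j}/\mu_i$ when $\mu_i>0$, and a dummy setting carrying the residual weight $1-\sum_i\mu_i$ with all acceptance coefficients zero (a valid local setting that never passes and contributes nothing to $\Pi$). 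Your observation that $\mu_i=\max_j p_{i,j}$ is the minimal normalization keeping every $z_{i,j}\leq 1$ is precisely why the completeness bound is tight, and your caveat about fixing the representation induced by the chosen local POVMs is apt, since the decomposition in Eq.~\eqref{eq:sum} is not unique for overcomplete sets such as the Pauli projections used later in the paper. In short, your write-up is a strictly more complete version of the paper's argument rather than a different one.
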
 
\begin{proof}
  For a measurement protocol $\Pi$ as in Eq.~\eqref{eq:protocol}, the probability distribution should satisfy (1) positivity ${\mu_i\geq 0,\forall i}$; and (2) completeness ${\sum_i\mu_i=1}$. 
  From Eq.~\eqref{eq:sum}, the quasi-probability distribution is given by ${p_{i,j}=\mu_i z_{i,j}}$ with ${z_{i,j}\in [0,1]}$, which then leads to the two constraints straightforwardly.
\end{proof}

Our consideration here can be naturally extended to two other perspectives. First, one can consider local operations and classical communication (LOCC) for the measurement settings, known also as adaptive measurements \cite{Yu.etal2019, Li.etal2019, Wang.Hayashi2019, Liu.etal2019b}; see Appendix~\ref{AppendixA} for the corresponding discussions. 
Second, the decomposition of $\Pi$ as in Eq.~\eqref{eq:sum} with finite local measurements can be generalized to the infinite scenario using continuous local measurements; see Appendix~\ref{AppendixB} for detailed discussions.

\section{Homogeneous QSV protocols with finite local projections}
A first application of the previous discussion on the locality of measurement protocols is QSV.
To verify a target pure state $\ket{\psi}$, a homogeneous QSV protocol takes on the general form
\begin{eqnarray}\label{eq:def_homo}
  \Omega_{\text{\rm Hom}} = (1-\nu) \openone + \nu \ket{\psi}\bra{\psi}\,,
\end{eqnarray}
where $0<\nu\leq 1$.
All eigenvalues of the homogeneous protocol are $1-\nu$ except the largest one which is the unity.
The parameter $\nu$ is exactly the spectral gap, thus $1/\nu$ gives the scaling of the verification efficiency for homogeneous protocols.

We note that the identity $\openone$ is a trivial measurement, which can be considered as no measurement at all or an arbitrary measurement whose outcomes must be accepted. 
Thus for the limit ${\nu=0}$, the protocol ${\Omega_{\text{\rm Hom}}=\openone}$ must be local.
On the other hand, the protocol becomes ${\Omega_{\text{\rm Hom}}=\ket{\psi}\bra{\psi}}$ for ${\nu=1}$ which cannot be realized locally since the target state $\ket{\psi}$ is assumed to be entangled.
Hence, in general, the homogeneous QSV protocol $\Omega_{\text{\rm Hom}}$ as in Eq.~\eqref{eq:def_homo} is the convex combination of the identity $\openone$ and the projection $\ket{\psi}\bra{\psi}$, and the locality of $\Omega_{\text{\rm Hom}}$ can be interpreted as finding the \textit{local ball} around $\openone$ with the maximal value of the parameter $\nu$ representing the radius.

An arbitrary measurement protocol for $n$-qubit systems can always be expanded with the Pauli representation uniquely as 
\begin{eqnarray}\label{eq:Pauli_Rep}
  \Pi = \frac{1}{2^n} \sum_{\alpha} c_{\alpha} \sigma_{\alpha_1} \otimes \cdots \otimes \sigma_{\alpha_n}\,,
\end{eqnarray}
where the coefficients are ${c_{\alpha}=\tr(\Pi \sigma_{\alpha_1} \otimes \cdots \otimes \sigma_{\alpha_n})}$ with ${\alpha \!=\! \alpha_1 \cdots \alpha_n  \!\in\! \{0,1,2,3\}^{\otimes n}}$, and ${\sigma_0=\openone}$. 
Expanding the Pauli operators with a finite set of measurements $\{M_i^1,\cdots,M_i^s\}_{i=1}^{k}$, we have
\begin{eqnarray}
  \sigma_\alpha = \vec{t}_\alpha \cdot \vec{M}\,,
\end{eqnarray}
where ${\vec{t}_\alpha\!=\!\bigl[t_\alpha^{(1,1)},\cdots,t_\alpha^{(1,s)},\cdots,t_\alpha^{(k,1)},\cdots,t_\alpha^{(k,s)}\bigr]}$ and ${\vec{M} = [M_1^1,\cdots,M_1^s,\cdots,M_k^1,\cdots,M_k^s]^T}$. 
Then the Pauli representation can be converted to
\begin{eqnarray}
  \sigma_{\alpha_1} \otimes \cdots \otimes \sigma_{\alpha_n} 
  = \vec{t}_{\alpha_1} \otimes \cdots \otimes \vec{t}_{\alpha_n} \cdot \vec{M}^{\otimes n}\,. 
\end{eqnarray}
Thus, we get the quasi-probability distribution 
\begin{eqnarray}\label{eq:Rep_Proj}
  p_{i_1\cdots i_n}
  := 
  \biggl(
  \frac{1}{2^n}\sum_{\alpha} c_{\alpha}
  \vec{t}_{\alpha_1} \otimes \cdots \otimes \vec{t}_{\alpha_n}
  \biggr)_{i_1\cdots i_n}\,.\quad
\end{eqnarray}
Notice that due to the flattening vector $\vec{M}$, the quasi-probability $p_{i_1\cdots i_n}$ is the flattening one-dimensional vector from the previous $p_{i,j}$ as defined in Theorem~\ref{thm:measure_d}.
Such a quasi-probability distribution is a linear function, then for the homogeneous QSV protocol, we have 
\begin{eqnarray}\label{eq:pro_homP}
  p_i(\Omega_\text{Hom})
  = (1-\nu)p_i(\openone)
  + \nu p_i(\psi)\,,
\end{eqnarray}
where $p_i(\openone)$ and $p_i(\psi)$ are the quasi-probability distributions of the identity and the projection on the target state, respectively.

The transformation between Pauli operators and the finite set of measurements $\{M_i^1,\cdots,M_i^s\}_{i=1}^{k}$ requires these measurements to constitute a complete set of bases in the Hilbert space.
In the following, we consider the standard Pauli projections, which are easy to realize in experiments.

\section{Homogeneous QSV protocols with Pauli projections}
The Pauli projections $\{P_i^0,P_i^1\}_{i=1}^{3}$ form an overcomplete set of bases for qubit systems, thus the representation is not unique.
One possible transformation is $P_{i}^{j}=\frac{1}{2}\bigl[\openone + (-1)^j \sigma_i\bigr]$ for $i=1,2,3$.
As for ${\sigma_0=\openone}$, we specifically choose the symmetric form $\openone=\frac{1}{3} \sum_{i,j} P_i^j$, i.e.,
\begin{eqnarray}\label{eq:def_T}
  T_\text{Pauli}:=
  \begin{bmatrix}
    \vec{t}_0\\
    \vec{t}_1\\
    \vec{t}_2\\
    \vec{t}_3
  \end{bmatrix}
  =
  \begin{bmatrix}
    1/3   &1/3    &1/3    &1/3    &1/3    &1/3  \\
    1     &-1     &0      &0      &0      &0    \\
    0     &0      &1      &-1     &0      &0    \\
    0     &0      &0      &0      &1      &-1   \\
  \end{bmatrix}
  \!.\quad
\end{eqnarray}
With such a transformation, the following corollary can be generated from Theorem~\ref{thm:measure_d}; see Appendix~\ref{AppendixC} for the proof.
\begin{corollary}\label{cor:Pauli}
  Considering a homogeneous QSV protocol $\Omega_{\text{\rm Hom}}$ for the target state \ket{\psi} as defined in Eq.~\eqref{eq:def_homo}, it is local under Pauli projections if the following two constraints are satisfied:
\begin{align}
  &\bullet{\bf Positivity}\nonumber\\
  &\qquad\quad\nu \leq \frac{1}{1- 3^n \min \{p_i(\psi)\}}
    \leq \frac{1}{2^n\!-\!2^{1-\!n}\!+\!1}\,.\hfill\label{eq:pos_homP}\\
  &\bullet{\bf Completeness} \qquad S(\psi) \leq 1\,.\label{eq:cop_homP}
\end{align}
\end{corollary}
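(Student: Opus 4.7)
The strategy is to specialize Theorem~\ref{thm:measure_d} to the Pauli-projection representation fixed by $T_{\text{Pauli}}$ and then exploit the linearity in Eq.~\eqref{eq:pro_homP}. First, I would compute the quasi-probability distribution of the identity. Because $T_{\text{Pauli}}$ encodes $\openone=\frac{1}{3}\sum_{i,j}P_i^j$ on each qubit, the expansion of $\openone^{\otimes n}$ contains only the all-zero Pauli string with coefficient $2^n$, and Eq.~\eqref{eq:Rep_Proj} immediately gives the uniform value $p_{i,j}(\openone)=1/3^n$ for all indices. This is the core structural fact that makes the rest of the argument go through.

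Next, I would plug this into Eq.~\eqref{eq:pro_homP} and impose Eq.~\eqref{eq:pos_d} on $\Omega_{\text{Hom}}$. Requiring $(1-\nu)/3^n+\nu\, p_i(\psi)\geq 0$ for every $i$ and solving for $\nu$ with the worst-case index yields
\begin{equation}
\nu\leq\frac{1}{1-3^n\min_i p_i(\psi)}\,,\nonumber
\end{equation}
which is the first inequality of Eq.~\eqref{eq:pos_homP}. The second, state-independent inequality then follows by establishing a universal lower bound $\min_i p_i(\psi)\geq(2^{1-n}-2^n)/3^n$ valid for every pure state. I would prove this by expanding $\ket{\psi}\bra{\psi}=2^{-n}\sum_\alpha c_\alpha\sigma_\alpha$, substituting the explicit rows of $T_{\text{Pauli}}$ into Eq.~\eqref{eq:Rep_Proj}, and noting that the contribution of Pauli strings containing $\sigma_0$ on a given site factors into $1/3$ (from the uniform row $\vec t_0$) while sites carrying a nontrivial Pauli contribute $\pm 1$; organizing the sum by the support $\mathrm{supp}(\alpha)\subseteq\{1,\dots,n\}$ and using $|c_\alpha|\leq 1$ with $\sum_\alpha c_\alpha^2=2^n$ for a pure state should produce precisely the stated worst-case bound.

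For completeness, I would exploit the constancy of $p_{i,j}(\openone)$ in $j$: since the identity contribution does not depend on the outcome label, the $\max_j$ in Eq.~\eqref{eq:cop_d} commutes with the convex combination,
\begin{equation}
\max_j\bigl[(1-\nu)p_{i,j}(\openone)+\nu p_{i,j}(\psi)\bigr]=\frac{1-\nu}{3^n}+\nu\max_j p_{i,j}(\psi)\,.\nonumber
\end{equation}
Summing over the $3^n$ values of $i$ gives $S(\Omega_{\text{Hom}})=(1-\nu)\,S(\openone)+\nu\,S(\psi)$; using $S(\openone)=1$ (a direct consequence of $p_{i,j}(\openone)=1/3^n$) the inequality $S(\Omega_{\text{Hom}})\leq 1$ collapses to $\nu\bigl[S(\psi)-1\bigr]\leq 0$, i.e.\ $S(\psi)\leq 1$ for any $\nu>0$, which is Eq.~\eqref{eq:cop_homP}.

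I expect the only nontrivial step to be the second inequality in Eq.~\eqref{eq:pos_homP}, since it is the one genuine minimization over all pure states rather than a bookkeeping consequence of Theorem~\ref{thm:measure_d}. The remaining pieces reduce to substituting $T_{\text{Pauli}}$ into the linear map of Eq.~\eqref{eq:Rep_Proj} and reading off that both $p_{i,j}(\openone)$ and $S(\openone)$ take their simplest possible values, which is what turns the general positivity/completeness constraints into the explicit corollary.
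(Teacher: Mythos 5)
Your proposal follows essentially the same route as the paper's Appendix~\ref{AppendixC} proof: compute $p_i(\openone)=3^{-n}$ from the Pauli expansion of the identity, use linearity to obtain the first positivity inequality, bound $\min_i p_i(\psi)$ over all pure states via $|c_\alpha|\le 1$ and the support structure of $T_{\text{Pauli}}$ (giving the worst case $(2^{1-n}-2^n)/3^n$ and hence the second inequality), and reduce completeness to $S(\psi)\le 1$ using the constancy of $p_{i,j}(\openone)$ in $j$. The only cosmetic difference is your invocation of $\sum_\alpha c_\alpha^2=2^n$, which is not needed---the paper's state-independent bound uses only $|c_\alpha|\le 1$.
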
 

Two remarks are in order. 
First, for the positivity constraint, we note that the first inequality in Eq.~\eqref{eq:pos_homP} gives the radius of the local ball;  
while the second one is obtained by considering all possible target states.
In other words, for the homogeneous QSV protocol, any target state can be verified with an efficiency no more than $O(2^n)$ as long as the completeness constraint is satisfied. 
Second, here we focus on the locality of homogeneous QSV protocols with local Pauli projections only, and it is reasonable to expect that more measurements should improve the efficiency of the protocols.
However, this is not true as the efficiency is still bounded by $O(2^n)$ even infinite continuous local projections or multi-outcome measurements are considered; see Appendixes \ref{AppendixB2} and \ref{AppendixB3} for more details.

\section{Applications}
By employing Pauli projections, here we consider the verification of several typical entangled states with our method.
More detailed analyses can be found in Appendix~\ref{AppendixD}.

(i) Bell state ${\ket{\Phi^{+}}=\frac1{\sqrt{2}}\bigl(\ket{00}+\ket{11}\bigr)}$.
By using our method, a homogeneous QSV protocol using only local Pauli measurement settings can be designed with an efficiency of ${1/\nu=3}$. 
Clearly, such a protocol is not optimal as the best one has an efficiency of ${1/\nu=3/2}$ \cite{Pallister.etal2018}. 
The reason lies in that a specific quasi-probability distribution is chosen in our method where the identity operation exists in each measurement setting. 
Hence, with an appropriate revision process, our protocol can be improved to give exactly the optimal efficiency.

(ii) Three-qubit GHZ state $\ket{\text{GHZ}_3}=\frac1{\sqrt{2}}\bigl(\ket{000}+\ket{111}\bigr)$.
With our method, the local homogeneous QSV protocol designed has an efficiency of ${1/\nu=17/4}$.
It is worse than that in Ref.~\cite{Pallister.etal2018} which is ${1/\nu=7/4}$.
However, with an additional revision process, a better efficiency of ${1/\nu=5/3}$ can be obtained. 
Moreover, one can achieve the optimal efficiency of ${1/\nu=3/2}$ \cite{Li.etal2020b} with a proper choice of the transformation $T_\text{Pauli}$.
Note that all stabilizer states can be verified by QSV protocols constructed with their stabilizers which are in the Pauli group \cite{Pallister.etal2018}, thus our method is able to give the local homogeneous QSV protocols for all stabilizer states with the quasi-probability distribution based on the Pauli representation. 

(iii) Three-qubit $W$ state $\ket{W_3}=\frac1{\sqrt{3}}\bigl(\ket{001}+\ket{010}+\ket{100}\bigr)$. 
In this case, our method is not able to give a local homogeneous QSV protocol as the quasi-probability distribution has ${S(W_3)=1.40(7)}$ which violates the completeness constraint. 
Even with a revision process, the constraint ${\tr(\Omega_i\ket{\psi}\bra{\psi})=1}$ for QSV cannot be satisfied for all the settings.
Note, however, that this only means $\ket{W_3}$ cannot be verified by any local homogeneous protocol using the Pauli projections, but with other local projections it might be possible.
On the other hand, a valid inhomogeneous protocol can be achieved by properly choosing the settings.
Such a protocol has an efficiency of ${1/\nu=13/3}$, which is better than that of the previous inhomogeneous protocol with Pauli projections \cite{Liu.etal2019b}.

More importantly, if we allow LOCC for the verification protocol, we are able to obtain a local homogeneous QSV protocol by modifying the completeness constraint.
The efficiency is given by ${1/\nu=2}$, which is better than the ${1/\nu=3}$ reported in Ref.~\cite{Liu.etal2019b}.
It is slightly worse than that of the nearly optimal homogeneous protocol of ${1/\nu=8/5}$ \cite{Li.etal2020a}, which however, requires much more complex local measurement settings.

\section{Extended applications}
The abstraction of choice-independent measurement protocols can be naturally extended to other tasks concerning only the protocol operators instead of the specific settings, such as the entanglement witness for detecting entanglement.
Moreover, with appropriate modifications,  the universality of choice-independent measurement protocols enables its extension to the task of parameter estimation including fidelity, entanglement and so on. 

\subsection{Construction of entanglement witness}
An entanglement witness $W$ is defined if for every separable state $\rho_\text{sep}$, one has $\tr(W\rho_\text{sep})\geq 0$; and for some entangled state $\rho_\text{ent}$, $\tr(W\rho_\text{ent})<0$.
Witnesses for detecting entanglement are typically of the form
\begin{eqnarray}
  W=\kappa \openone - \ket{\psi}\bra{\psi}\,,
\end{eqnarray}
where \ket{\psi} is the entangled state to be detected.
The parameter $\kappa$ is the square of the maximal Schmidt coefficient of \ket{\psi} when all bipartitions are considered \cite{Bourennane.etal2004}.

We can associate entanglement witnesses with the homogeneous protocols as 
\begin{eqnarray}
  W=\biggl(\kappa+\frac{1-\nu}{\nu}\biggr)\openone-\frac{1}{\nu}\Omega_\text{Hom}\,.
\end{eqnarray}
Hence, to determine whether a state $\ket{\psi}$ is entangled is equivalent to verifying if the target state is $\ket{\psi}$ within the infidelity ${\epsilon=1-\kappa}$. 
Such a relation transforms the witnesses from the formation of observables to the construction of choice-independent measurement protocols.
This equivalence improves the estimation of shot noise from $1/\sqrt{N}$ (statistical mean error) to $1/N$ (error of hypothesis testing).

\subsection{Parameter estimation}
Considering choice-independent measurement protocols, one finds that homogeneous QSV can also be regarded as fidelity estimation \cite{Li.etal2020b}, i.e., 
\begin{eqnarray}
  F=\bra{\psi}\sigma\ket{\psi}= \frac{\tr(\Omega_\text{Hom}\sigma)-(1-\nu)}{\nu}\,,
\end{eqnarray}
with standard deviation 
\begin{eqnarray}
  \Delta F = \frac{\sqrt{(1-F)(F+\nu^{-1}-1)}}{N}\leq\frac{1}{2\nu\sqrt{N}}\,.
\end{eqnarray}
It shows that the number of copies required is ${N\sim O(\epsilon^2)}$, which is worse than that of verification. 
Nevertheless, one can directly achieve the value of fidelity rather than a bound.
In addition, performing fidelity estimation only needs to know the frequency of pass instances rather than the number of successive ones, which is much more robust in experiments. 
Moreover, considering entanglement quantified by witness operators \cite{Brandao.etal2005}, the local protocol for entanglement estimation can be similarly designed as being choice-independent as well.

For a homogeneous protocol $\Omega_\text{Hom}$, the positivity constraint can always be achieved with a proper $\nu$. 
If the completeness constraint is violated, we can consider the measurement protocol
\begin{eqnarray}
 \tilde{\Omega}_\text{Hom}= \frac{\Omega_\text{Hom}}{S(\Omega_\text{Hom})}\,.
\end{eqnarray}
Now the passing probability of the target state is given by $1/S(\Omega_\text{Hom})$, so it cannot be used for verification.
However, the task of estimation is immune to this problem as we only need to add a corresponding factor of scaling. 
Exemption of the completeness constraint enables our method to give a local measurement protocol for arbitrary estimation tasks, which is comparable to the optimal one.

\section{Conclusion}
We have proposed a systematic strategy to design QSV protocols for arbitrary entangled pure states. 
By initiating the concept of choice-independent measurement protocols, we have successfully converted the original problem to the checking of the locality of homogeneous protocols. 
By formalizing the locality of measurement protocols, we have derived the corresponding constraints for local measurements, LOCC, as well as infinite continuous measurements respectively.
Then, for the homogeneous QSV protocols whose operators can be directly written down for arbitrary pure states, we acquired the corresponding constraints for them being local. 
Specifically, we demonstrated the explicit procedures of the protocol design using Pauli projections, for verifying Bell states, stabilizer states and $W$ states.
For certain cases, our method
has achieved the best strategies to date.

Furthermore, the discussions on the locality of measurement protocols can be applied to more tasks such as the construction of entanglement witness. 
Finally, we have shown that all these tasks can be converted to parameter estimation.
In this case, the local measurement protocols can be directly given, as the constraints of local protocols for these tasks can always be satisfied.

\acknowledgments
We are grateful to Otfried G\"uhne, Huangjun Zhu and Zihao Li for helpful discussions.
This work was supported by the National Natural Science Foundation of China (Grants No.~12175014, No.~92265115, and No.~91850205) and the National Key R\&D Program of China (Grant No.~2022YFA1404900). 
Y.-C.L. is also supported by the Deutsche Forschungsgemeinschaft (DFG, German Research Foundation, project numbers 447948357 and 440958198) and the Sino-German Center for Research Promotion (Project M-0294).

%
%

\onecolumngrid

\appendix

\section{Local operations and classical communication}\label{AppendixA}
In the main text, local measurements are employed for the construction of measurement protocols.
Here, we extend our scheme by considering local operations and classical communication (LOCC), known also as adaptive measurements. 
For an $n$-partite quantum system, a measurement protocol with adaptive measurements can be written as
\begin{eqnarray}\label{eq:protocol_adp}
    \Pi = \sum_{k}\mu_k \Pi_k\,,
\end{eqnarray}
where $\Pi_k$s represent adaptive measurement settings constructed by $n$ local $s$-outcome POVMs ${\{M_{i}^1,\cdots,M_{i}^s\}_{i}}$ with different measurement orders. It can be understood as
\begin{eqnarray}
    \Pi_k 
    &=& \cP_k\Bigl\{\sum_{i,j}q_k(i_1)q_k(i_2|i_1 j_1)q_k(i_3|i_1 j_1,i_2 j_2)\cdots q_k(i_n|i_1 j_1,i_2 j_2,\cdots, i_{n-1} j_{n-1}) \nonumber\\
    && z_k(j_n|i_1 j_1,i_2 j_2,\cdots, i_{n-1} j_{n-1})
    M_{i_1}^{j_1} \otimes \cdots \otimes M_{i_n}^{j_n}\Bigr\}\nonumber\\
    &=& \sum_{i,j}q_{i,j,k} M_{i_1}^{j_1} \otimes \cdots \otimes M_{i_n}^{j_n}\,.
\end{eqnarray}
The setting is adaptively measured from $M_{i_1}$ to $M_{i_n}$ with the probability distribution $q_k(\cdot|\cdot)$ based on previous outcomes, and all possible measurement orders are considered with the permutation operator $\cP_k\{\cdot\}$.  
The parameter $z_k(\cdot|\cdot)$ is either $0$ or $1$ that tells us which outcomes of the finial measurement correspond to ``pass'' instances, and it can be generally considered as $z_k\in[0,1]$.
Denoting them compactly as $q_{i,j,k}$, we have
\begin{eqnarray}\label{eq:sum_adp}
  \Pi = \sum_{i,j} p_{i,j} M_{i_1}^{j_1} \otimes \cdots \otimes M_{i_n}^{j_n}\,,
\end{eqnarray}
where $p_{i,j}:=\sum_{k}\mu_k q_{i,j,k}$.

\begin{theorem}\label{thm:measure_a}
A measurement protocol can be realized with LOCC \emph{iff} the quasi-probability distribution $p_{i,j}$ satisfies the following two constraints under the representation of Eq.~\eqref{eq:sum_adp}, 
\begin{align}
    &\bullet{\bf Positivity}&& \min_{i,j}\bigl\{p_{i,j}(\Pi)\bigr\} \geq 0\,,\label{eq:pos_adp}\\
    &\bullet{\bf Completeness}&& S(\Pi):= \sum_{i} \max_{j} \bigl\{p_{i,j}(\Pi)\bigr\} \leq s^{n-1}\,.\label{eq:cop_adp}
\end{align}
\end{theorem}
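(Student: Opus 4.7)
The plan is to mirror the argument for Theorem~\ref{thm:measure_d}, tracking how classical communication expands the admissible completeness bound from $1$ to $s^{n-1}$. Intuitively, with $n-1$ prior parties each producing one of $s$ outcomes, the full outcome history indexes $s^{n-1}$ different ``branches'' that the remaining parties can react to, so the quasi-probabilities have $s^{n-1}$ times as much slack before the protocol becomes infeasible. Positivity is again immediate: $p_{i,j}=\sum_k \mu_k q_{i,j,k}$ is a nonnegative combination of nonnegative numbers, and conversely any nonnegative $p_{i,j}$ can be absorbed into $q_{i,j,k}$ for a single $k$.

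For the necessity of completeness, I would first fix a single adaptive setting $\Pi_k$ with a canonical measurement order $1,2,\ldots,n$; the permutation operator $\cP_k$ is handled by applying the same argument within each order and then taking the convex combination, which does not change the bound. For that single setting, the path weight factorises as $q_{i,j,k}=q_k(i_1)\,q_k(i_2|i_1 j_1)\cdots q_k(i_n|i_1 j_1,\ldots,i_{n-1}j_{n-1})\,z_k(j_n|\cdot)$. Since $z_k\in[0,1]$ and $\sum_{i_\ell} q_k(i_\ell|\cdot)=1$ for every history, the elementary inequality
\begin{equation}
\sum_{i_\ell}\max_{j_{\ell-1}} q_k(i_\ell\,|\,\ldots,i_{\ell-1},j_{\ell-1})\;\leq\;\sum_{i_\ell,j_{\ell-1}} q_k(i_\ell\,|\,\ldots,i_{\ell-1},j_{\ell-1})\;=\;s
\end{equation}
can be iterated across the $n-1$ ``inner'' parties. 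This yields $\sum_i \max_j q_{i,j,k}\leq s^{n-1}$. Averaging via $\sum_k \mu_k(\cdot)$ and using $\max_j \sum_k \mu_k q_{i,j,k}\leq \sum_k \mu_k \max_j q_{i,j,k}$ gives $S(\Pi)\leq s^{n-1}$, as claimed in Eq.~\eqref{eq:cop_adp}.

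For sufficiency, given $p_{i,j}$ with $p_{i,j}\geq 0$ and $S(\Pi)\leq s^{n-1}$, set $\mu_i:=\max_j p_{i,j}$ and $z_{i,j}:=p_{i,j}/\mu_i\in[0,1]$ (with $z_{i,j}$ arbitrary where $\mu_i=0$). Then $\sum_i \mu_i \leq s^{n-1}$, so $\tilde\mu_i:=\mu_i/s^{n-1}$ defines a subnormalised probability distribution over $i$; it can be completed to a genuine distribution by appending a trivial always-reject setting. I would then argue by induction on $n$: for $n=1$ the statement reduces to Theorem~\ref{thm:measure_d}; for the inductive step, party~$1$ samples $i_1$ according to a marginal extracted from $\tilde\mu$, performs the measurement, and transmits $(i_1,j_1)$ as classical side information. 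Each value of $j_1$ seeds one of $s$ conditional $(n-1)$-party problems, each of which satisfies the inductive completeness bound $s^{n-2}$ after the reweighting by $1/s$ absorbed from $q_k(\cdot|i_1,j_1)$ summing to one; these can therefore be realised by an adaptive sub-protocol, and composing them reproduces $\Pi$ exactly.

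The main obstacle is the sufficiency construction: the decomposition has to respect the forward-in-time causal structure of LOCC, so the marginals on $(i_1,\ldots,i_\ell)$ extracted from $p_{i,j}$ at each step must be consistent with a valid conditional probability distribution downstream. A clean way to organise this is to view the $s^{n-1}$ outcome patterns $(j_1,\ldots,j_{n-1})$ as indexing $s^{n-1}$ parallel ``pure local'' protocols, each of total weight at most $1$ by Theorem~\ref{thm:measure_d}; LOCC then stitches them together at cost exactly $s^{n-1}$, matching the bound tightly. The permutation wrapper $\cP_k$ requires only that we carry the same induction symmetrically over orderings, which adds bookkeeping but no new inequalities.
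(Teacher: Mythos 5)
Your necessity argument coincides with the paper's own proof: positivity is immediate from $p_{i,j}=\sum_k\mu_k q_{i,j,k}$ with all factors nonnegative, and the completeness bound follows by discarding $z_k\le 1$, replacing the maximum over $j$ by the sum over $j_1\cdots j_{n-1}$, and telescoping the normalizations $\sum_{i_\ell}q_k(i_\ell|\cdot)=1$ --- exactly the chain of inequalities in the paper's Appendix~A. Note, however, that the paper's proof stops there; it establishes only the ``only if'' direction of the claimed equivalence.

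The sufficiency construction is therefore the genuinely new part of your proposal, and it is where the gap lies. The pivotal assertion --- that the $s^{n-1}$ outcome patterns $(j_1,\ldots,j_{n-1})$ index parallel local protocols ``each of total weight at most $1$ by Theorem~\ref{thm:measure_d}'', which LOCC then stitches together at cost $s^{n-1}$ --- is precisely what needs proof. The constraint \eqref{eq:cop_adp} bounds $\sum_i\max_j p_{i,j}$, i.e., the sum over settings of the worst outcome pattern; it says nothing about the weight of each fixed-$j$ slice individually, and nothing guarantees that the conditionals $q_k(i_\ell|i_1j_1,\ldots,i_{\ell-1}j_{\ell-1})$ needed for a forward-in-time decomposition can be extracted from $p_{i,j}$ as genuinely normalized distributions. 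The inductive step (``each conditional $(n-1)$-party problem satisfies the bound $s^{n-2}$ after reweighting by $1/s$'') is likewise asserted rather than derived from \eqref{eq:pos_adp}--\eqref{eq:cop_adp}. To see that something finer than \eqref{eq:cop_adp} is at stake, observe that any single adaptive setting with fixed measurement order $1\to\cdots\to n$ obeys the much stronger nested bound
\begin{equation}
  \sum_{i_1}\max_{j_1}\sum_{i_2}\max_{j_2}\cdots\sum_{i_n}\max_{j_n}\, q_{i,j,k}\;\le\;1\,,
\end{equation}
obtained by the same telescoping without ever paying a factor of $s$; the causal ordering thus imposes order-dependent constraints that the single global inequality \eqref{eq:cop_adp} does not encode, so the branch-stitching argument cannot close as described. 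Since the paper's own proof is silent on sufficiency, you are consistent with it on the direction it actually proves, but your proposal claims the full equivalence and does not deliver the converse.
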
 
\begin{proof}
  For the measurement protocol $\Pi$ as in Eq.~\eqref{eq:protocol_adp}, the probability distribution should satisfy (1) positivity ${\mu_k\geq 0,\forall k}$; and (2) completeness ${\sum_k\mu_k=1}$. 
  Then the positivity constraint in Eq.~\eqref{eq:pos_adp} can be directly achieved. For the completeness constraint, considering only one adaptive measurement setting, there are $\sum_{i_1}q_k(i_1)=1$ and $\sum_{i_2}q_k(i_2|i_1 j_1)=1$. Then we have $\sum_{i_1}q_k(i_1)q_k(i_2|i_1 j_1)\leq 1$, and thus $\sum_{j_1}\sum_{i_1}q_k(i_1)q_k(i_2|i_1 j_1)\leq s$. 
  With $z_k(j_n|\cdot)\in[0,1], \forall j_n$, for each adaptive measurement setting, we have $\sum_{i} \max_{j} \bigl\{q_{i,j,k}(\Pi)\bigr\}\leq s^{n-1}\cdot 1$. Finally, 
  for the quasi-probability distribution ${p_{i,j}=\sum_{k}\mu_k q_{i,j,k}}$, one can deduce $\sum_{i} \max_{j} \bigl\{p_{i,j}(\Pi)\bigr\} \leq s^{n-1}$.
\end{proof}

\section{Extension to infinite continuous local measurements}\label{AppendixB}
\subsection{Locality of measurement protocols based on infinite continuous local measurements}
Considering $n$-qubit systems, the decomposition of $\Pi$ as in Eq.~\eqref{eq:sum} with finite local measurements can be generalized to the infinite scenario using continuous local projections over the Bloch sphere, such that
\begin{eqnarray}\label{eq:int}
  \Pi=\int_\cB \D\cV_1\cdots \D\cV_n w_{\vec{r}_1,\cdots,\vec{r}_n}(\Pi) P_{\vec{r}_1}\otimes\cdots\otimes P_{\vec{r}_n}\,,
\end{eqnarray}
where $\cB$ denotes the integral over $n$ Bloch spheres.
The local operator ${P_{\vec{r}} = \frac{1}{2}(\openone+\vec{r}\cdot\vec{\sigma})}$ is the projection onto the pure state located at the unit vector $\vec{r}$, and ${\vec{\sigma}=(\sigma_1,\sigma_2,\sigma_3)}$ are the Pauli matrices.
Then, one obtains the following theorem.
\begin{theorem}\label{thm:measure_c}
A measurement protocol for $n$-qubit systems is local \emph{iff} the quasi-probability distribution $w_{\vec{r}_1,\cdots,\vec{r}_n}(\Pi)$ satisfies the following two constraints under the diagonal $P$-representation of Eq.~\eqref{eq:int},
\begin{align}
    &\bullet{\bf Positivity}&&\min_{\forall \vec{r}}\bigl\{w_{\vec{r}_1,\cdots,\vec{r}_n}(\Pi)\bigr\} \geq 0 \,,\qquad\quad\label{eq:pos_mea}\\
    &\bullet{\bf Completeness}&&\cS(\Pi) \leq 2^n\,.\label{eq:cop_mea}
\end{align}
 $\cS(\Pi)$ is the integral over the envelope surface of all $2^n$ quasi-probability distributions $\bigl\{w_{\vec{r}_{1,j_1},\cdots,\vec{r}_{n,j_n}}(\Pi)\bigr\}_{j\in\{0,1\}^{\otimes n}}$, i.e.,
  \begin{equation}
    \cS(\Pi):=\int_\cB \D\cV_1\cdots \D\cV_n \tilde{w}_{\vec{r}_1,\cdots,\vec{r}_n}(\Pi)\,,
  \end{equation}
  where ${\tilde{w}_{\vec{r}_1,\cdots\!,\vec{r}_n}(\Pi)
    =
    \displaystyle\max_{j\in\{0,1\}^{\otimes n}}
    \bigl\{w_{\vec{r}_{1,j_1},\cdots\!,\vec{r}_{n,j_n}}\bigr\}}$, with ${\vec{r}_{i,0}:=\vec{r}_i}$ and ${\vec{r}_{i,0}+\vec{r}_{i,1}=0}$. 
\end{theorem}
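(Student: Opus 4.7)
The plan is to mirror the proof of Theorem~\ref{thm:measure_d}, replacing the discrete sum over local POVM settings with a continuous integral over the product of $n$ Bloch spheres corresponding to rank-one local projective measurements. The key geometric input is that a local projective qubit measurement is specified by an antipodal pair $\{\vec{r},-\vec{r}\}$, so the full-sphere parameterization used in Eq.~\eqref{eq:int} is a $2$-to-$1$ covering of the space of genuine measurement settings (parameterized, for concreteness, by any hemispherical fundamental domain $\cF$).

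For the forward direction I would write an arbitrary local realization of $\Pi$ as
\begin{eqnarray}
\Pi=\int_{\cF^n}\!\mu(\vec{r})\sum_{\vec{j}\in\{0,1\}^{\otimes n}}\!\!z(\vec{r};\vec{j})\,P_{\vec{r}_{1,j_1}}\!\otimes\cdots\otimes P_{\vec{r}_{n,j_n}}\,,
\end{eqnarray}
where $\mu\geq 0$ is the density of measurement settings with $\int_{\cF^n}\mu=1$, and $z(\vec{r};\vec{j})\in[0,1]$ are the pass weights for outcome $\vec{j}$. Unfolding the integration to the full product of Bloch spheres and reading off $w_{\vec{r}_{1,j_1},\ldots,\vec{r}_{n,j_n}}(\Pi)=\mu(\vec{r})\,z(\vec{r};\vec{j})$ puts $\Pi$ in the form of Eq.~\eqref{eq:int}. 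Positivity is then immediate from $\mu\geq 0$ and $z\geq 0$. For completeness, $\tilde w$ is invariant under each antipodal flip $\vec{r}_i\leftrightarrow-\vec{r}_i$ by construction, so its integral over the full product of spheres equals $2^n$ times its integral over a single fundamental domain; since $\max_{\vec{j}}z(\vec{r};\vec{j})\leq 1$, this gives $\cS(\Pi)\leq 2^n\int_{\cF^n}\mu=2^n$.

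For the converse I would invert the construction: given $w\geq 0$ with $\cS(\Pi)\leq 2^n$, set the setting density on the fundamental domain to $\mu(\vec{r}):=\tilde w(\vec{r})$ and define the pass weights by $z(\vec{r};\vec{j}):=w_{\vec{r}_{1,j_1},\ldots,\vec{r}_{n,j_n}}/\tilde w(\vec{r})$ wherever $\tilde w>0$ (and arbitrarily elsewhere). This yields $z\in[0,1]$ pointwise and $\int_{\cF^n}\mu=\cS(\Pi)/2^n\leq 1$, and any residual probability mass $1-\cS(\Pi)/2^n$ is absorbed into a trivial always-pass (identity) measurement, producing $\Pi$ as a bona fide local protocol.

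The main obstacle will be handling the $2$-to-$1$ covering cleanly — ensuring that the envelope $\tilde w$ correctly decouples the ``which setting'' from the ``which outcome'' degrees of freedom, and dealing measurably with the null set on which $\tilde w$ vanishes. A minor but necessary check is that restricting to rank-one local projections entails no loss of generality on qubits, which follows because any local POVM element can itself be expanded as a nonnegative combination of rank-one projectors and folded into the integrand of Eq.~\eqref{eq:int}.
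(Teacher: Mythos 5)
Your necessity argument is essentially the paper's own: factor the quasi-probability as $w=\mu z$ with $\mu\geq 0$ the setting density and $z\in[0,1]$ the pass weights, observe that the $2^n$ antipodal sign combinations $(\vec{r}_{1,j_1},\dots,\vec{r}_{n,j_n})$ all belong to a single measurement setting so that the full-sphere parameterization is a $2^n$-fold covering of the settings, and bound $\tilde{w}=\mu\max_{\vec j}z\leq\mu$ to get $\cS(\Pi)\leq\int_\cB\mu=2^n$. Your bookkeeping with an explicit fundamental domain $\cF^n$ is in fact cleaner than the paper's (whose proof states the inequality between $\mu$ and $\tilde w$ in the reversed direction, evidently a slip), and the paper does not spell out the sufficiency direction at all, so your converse is a genuine addition rather than a restatement.

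That converse, however, contains one concrete error: after setting $\mu:=\tilde w$ and $z:=w/\tilde w$ on $\cF^n$, you propose to absorb the residual probability mass $1-\cS(\Pi)/2^n$ into an ``always-pass (identity) measurement.'' An always-accept setting contributes $\bigl(1-\cS(\Pi)/2^n\bigr)\openone$ to the protocol operator, so the construction would realize $\Pi+\bigl(1-\cS(\Pi)/2^n\bigr)\openone$ rather than $\Pi$. The residual mass must instead go to a setting with all pass weights zero (always reject, which is admissible since $z_{i,j}\in[0,1]$ includes $z\equiv 0$ and contributes nothing to the operator), or equivalently one can rescale $\mu\to\mu\,2^n/\cS(\Pi)$ and $z\to z\,\cS(\Pi)/2^n$ so that $\int_{\cF^n}\mu=1$ with $z$ still in $[0,1]$. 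A secondary caveat: your closing remark that restricting to rank-one projectors is ``no loss of generality'' is fine for positivity but not for the completeness bound, since regrouping a multi-outcome local POVM into antipodal rank-one pairs changes the partition into settings and hence the value of $\cS$; the bound $2^n$ is specific to two-outcome projective settings, as the paper's own remark about $s$-outcome POVMs (bound $s^n$) makes clear.
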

\begin{proof}
  Similar to the proof of Theorem~\ref{thm:measure_d}, the quasi-probability distribution can be considered as the multiplication of the probability distribution of measurement settings and their corresponding pass probabilities, i.e., ${w_{\vec{r}_1,\cdots,\vec{r}_n}=\mu_{\vec{r}_1,\cdots,\vec{r}_n}z_{\vec{r}_1,\cdots,\vec{r}_n}}$ with ${z\in[0,1]}$. 
  Note that the subscript is omitted if no ambiguity arises.
  The probability distribution should satisfy (1) positivity ${\mu\geq 0, \forall \vec{r}}$, so Eq.~\eqref{eq:pos_mea} is directly achieved; (2) completeness ${\int_{|\cB|} \D\cV \mu = 1}$, where $|\cB|$ represents the integral over the space of all measurement settings. 
  Note that for $n$-qubit systems, the projective measurements are  $\bigl\{P_{\vec{r}_{i,0}},P_{\vec{r}_{i,1}}\bigr\}$ with ${\vec{r}_{i,0}+\vec{r}_{i,1}=0}$. 
  Thus, the assemble of projections $\bigl\{P_{\vec{r}_{1,j_1}}\otimes\cdots\otimes P_{\vec{r}_{n,j_n}}\bigr\}_{j\in\{0,1\}^{\otimes n}}$ are in the same setting. Then, with the integral over $n$ Bloch spheres, we have ${\int_{\cB} \D\cV \mu = 2^n}$. 
  As ${z\leq 1}$, we get $\mu_{\vec{r}_1,\cdots,\vec{r}_n} \leq \max_{j\in\{0,1\}^{\otimes n}}
    \{\!w_{\vec{r}_{1,j_1},\cdots\!,\vec{r}_{n,j_n}}\}:=\tilde{w}_{\vec{r}_1,\cdots,\vec{r}_n}$, and finally Eq.~\eqref{eq:cop_mea} follows.
    See Fig.~\ref{fig:envelope} for an illustration of the quasi-probability distribution for two-qubit systems.
\end{proof}

\begin{figure}[t]
  \includegraphics[width=0.4\textwidth]{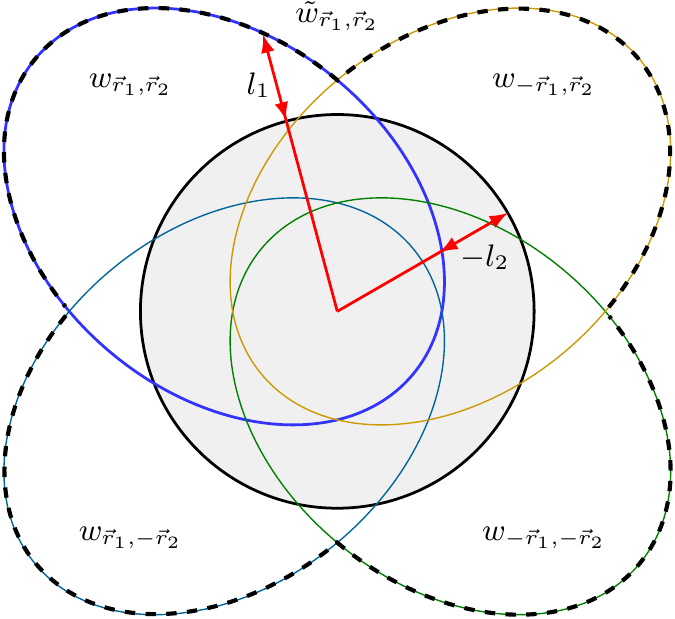}
  \caption{
    Illustration of the quasi-probability distribution for two-qubit systems. 
    The circle in black denotes the Hilbert space $\mathcal{H}^{\otimes 2}$, and the ellipse in blue is the quasi-probability distribution $w_{\vec{r}_1,\vec{r}_2}$. 
    The distance between a point in the Hilbert space and another point in the quasi-probability distribution represents the value of $w_{\vec{r}_1,\vec{r}_2}$.
    If the points are outside of the Hilbert space, the distance is positive ($l_1$); otherwise it is negative ($l_2$). 
    The envelope $\tilde{w}_{\vec{r}_1,\vec{r}_2}$ is shown with dashed curves.
    Note that the intersection of the circle and the ellipse indicates that  it is against the positivity constraint as $w<0$ for some $\vec{r}$. Hence the illustration here is not a quasi-probability distribution for local measurement protocols.
    }
  \label{fig:envelope}
\end{figure}

Several remarks are in order. 
First, the inequality of the completeness constraint is saturated if the protocol is trivial as $\openone$ such that ${\tilde{w}(\openone)=w(\openone)}$.
Second, for the infinite scenario, if we consider $s$-outcome projections like the symmetric informationally complete (SIC) POVMs, the upper bound of the completeness constraint changes from $2^n$ to $s^n$.
Finally, if $w_{\vec{r}_1,\cdots,\vec{r}_n}=\sum_{i,j}p_{i,j}\delta(\vec{r}_1-\vec{r}_{i_1,j_1})\cdots\delta(\vec{r}_n-\vec{r}_{i_n,j_n})$ with $\delta(\cdot)$ being the delta function, Theorem~\ref{thm:measure_c} reduces to Theorem~\ref{thm:measure_d} for qubit systems.

\subsection{Homogeneous QSV protocols with infinite local projections}\label{AppendixB2}
We revisit the homogeneous QSV protocol in its general form as in Eq.~\eqref{eq:def_homo},
\begin{eqnarray}
  \Omega_{\text{\rm Hom}} = (1-\nu) \openone + \nu \ket{\psi}\bra{\psi}\,,\quad(0<\nu\leq 1)
  \label{eq:def_homoA}
\end{eqnarray}
for a target pure state \ket{\psi}.
We consider the locality of $\Omega_{\text{\rm Hom}}$ with infinite local projections by following Theorem~\ref{thm:measure_c}.
Note that the quasi-probability distribution $w(\Pi)$ is not unique for the measurement protocol $\Pi$. 
Considering the spherical harmonics expansion of $w(\Pi)$ with order-$0$ and order-$1$ components only, a unique representation for $n$-qubit systems can be written as
\begin{eqnarray}\label{eq:def_w}
  w_{\vec{r}_1,\cdots,\vec{r}_n}(\Pi)
  =\frac{1}{(4\pi)^n}
  \tr\bigl[\Pi (\openone+3\vec{r}_1\cdot\vec{\sigma})\otimes\cdots\otimes(\openone+3\vec{r}_n\cdot\vec{\sigma})\bigr]\,,
\end{eqnarray}
and higher-order spherical harmonics do not change $\Pi$.

The representation of $w(\cdot)$ in Eq.~\eqref{eq:def_w} is a linear function of operators, then $w(\Omega_{\text{\rm Hom}})$ of the homogeneous QSV protocol is given by
\begin{eqnarray}\label{eq:pro_hom}
  w(\Omega_{\text{\rm Hom}})
  = (1-\nu)w(\openone)
  + \nu w(\psi)\,,
\end{eqnarray}
where $w(\openone)$ and $w(\psi)$ are the quasi-probability distributions of the identity and the projection on the target state, respectively. 
Hence, we have the corollary below for homogeneous QSV protocols.
\begin{corollary}
\label{cor:homQSV}
Considering a homogeneous QSV protocol $\Omega_{\text{\rm Hom}}$ for the target state \ket{\psi} as defined in Eq.~\eqref{eq:def_homoA}, it is local if the following two constraints are satisfied:
\begin{align}
  &\bullet{\bf Positivity}  &&\nu \leq \frac{1}{1-(2\pi)^n \min \{w(\psi)\}} \leq \frac{1}{2^{n-1}+1}\,,\label{eq:pos_hom}\\
  &\bullet{\bf Completeness} &&\cS(\psi) \leq 2^n\,.\label{eq:cop_hom}
\end{align}
\end{corollary}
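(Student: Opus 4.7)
The plan is to specialize Theorem~\ref{thm:measure_c} to $\Omega_\text{\rm Hom}$, exploiting the linearity of $w$ from Eq.~\eqref{eq:pro_hom} together with the key observation that, in the representation of Eq.~\eqref{eq:def_w}, the quasi-probability of the identity is a constant on the product of Bloch spheres. Concretely, since the Pauli operators are traceless,
\begin{equation*}
w(\openone) = \frac{1}{(4\pi)^n}\tr\Bigl[\openone\bigotimes_{i}(\openone+3\vec{r}_i\cdot\vec\sigma)\Bigr] = \frac{1}{(2\pi)^n},
\end{equation*}
independent of $\vec{r}$. By Eq.~\eqref{eq:pro_hom} this yields $w(\Omega_\text{\rm Hom}) = (1-\nu)/(2\pi)^n + \nu\,w(\psi)$, and since adding a constant commutes with the pointwise maximum, the same linear decomposition holds for the envelope $\tilde{w}$.

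For the positivity constraint of Theorem~\ref{thm:measure_c}, $\min w(\Omega_\text{\rm Hom}) = (1-\nu)/(2\pi)^n + \nu\min w(\psi)$; imposing non-negativity and solving for $\nu$ delivers the first inequality in Eq.~\eqref{eq:pos_hom}. For the second, target-independent bound I would use a single-qubit eigenvalue argument: the factor $\openone+3\vec{r}\cdot\vec\sigma$ has spectrum $\{-2,4\}$, so $\bigotimes_{i}(\openone+3\vec{r}_i\cdot\vec\sigma)$ has minimum eigenvalue $-2\cdot 4^{n-1} = -2^{2n-1}$, attained on a single-flip eigenvector. The pointwise inequality $\bra{\psi}O_{\vec{r}}\ket{\psi}\geq\lambda_{\min}(O_{\vec{r}})$ then gives $(2\pi)^n\min w(\psi)\geq -2^{n-1}$ universally, which rearranges to the claimed $\frac{1}{2^{n-1}+1}$ bound on the first inequality.

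For the completeness constraint, integrating $\tilde{w}(\Omega_\text{\rm Hom})$ over the $n$ Bloch spheres (each of area $4\pi$) produces $\cS(\Omega_\text{\rm Hom}) = (1-\nu)\cdot 2^n + \nu\,\cS(\psi)$; imposing Eq.~\eqref{eq:cop_mea} and cancelling $\nu>0$ reduces immediately to Eq.~\eqref{eq:cop_hom}. I do not anticipate a serious obstacle here; the one non-mechanical input is the constancy of $w(\openone)$, and once it is observed both conditions of Theorem~\ref{thm:measure_c} decouple cleanly between the $\openone$ and $\ket{\psi}\bra{\psi}$ parts, so that the corollary follows by linearity together with the single-qubit eigenvalue computation. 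The only place where one must be careful is keeping track of the factors of $4\pi$ and $2^n$ between Eq.~\eqref{eq:def_w} and the surface-area integral, so that the numerical constants $2^{n-1}$ and $2^n$ come out correctly.
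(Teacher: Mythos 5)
Your proposal is correct and follows essentially the same route as the paper's proof: specialize Theorem~\ref{thm:measure_c} using the linearity in Eq.~\eqref{eq:pro_hom}, compute $w(\openone)=2^n/(4\pi)^n$ (you via tracelessness, the paper via symmetry), obtain the first inequality from pointwise positivity, get the universal bound from the spectrum $\{-2,4\}$ of $\openone+3\vec{r}\cdot\vec{\sigma}$ giving minimum eigenvalue $-2\cdot4^{n-1}$ of the tensor factor, and reduce completeness by cancelling $\nu>0$. Your explicit remark that the constancy of $w(\openone)$ is what lets the envelope $\tilde{w}$ decompose linearly is a point the paper leaves implicit, and your $\geq$ reading of the second bound in Eq.~\eqref{eq:pos_hom} matches the paper's intended meaning (the worst case over all target states) as clarified in the remark following the corollary.
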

\begin{proof}
From Eq.~\eqref{eq:def_w}, we directly get
\begin{eqnarray}
  w(\openone)={2^n}/{(4\pi)^n}
\end{eqnarray}
by considering the symmetry of the quasi-probability distribution.
Then, the positivity constraint of Eq.~\eqref{eq:pos_mea} in Theorem~\ref{thm:measure_c} is translated here for $\Omega_{\text{\rm Hom}}$, such that
\begin{eqnarray}
  \min_{\forall \vec{r}}\bigl\{w(\Omega_{\text{\rm Hom}})\bigr\}
  =(1-\nu){2^n}/{(4\pi)^n}+\nu\min_{\forall \vec{r}}\{w(\psi)\} \geq 0\,,
\end{eqnarray}
from which the inequality $\nu \!\leq\! 1/\bigl[1\!-\!(2\pi)^n \min \{w(\psi)\}\bigr]$ can be deduced.
Next, since the eigenvalues of $(\openone+3\vec{r}\cdot\vec{\sigma})$ are $4$ and $-2$, one obtains
\begin{eqnarray}
  \min \bigl\{w(\psi)\bigr\}=[4^{n-1}\times(-2)]/(4\pi)^n\,,
\end{eqnarray}
then the second inequality in Eq.~\eqref{eq:pos_hom} follows.

Finally, the completeness constraint of Eq.~\eqref{eq:cop_mea} in Theorem~\ref{thm:measure_c} is transformed here for $\Omega_{\text{\rm Hom}}$ as
\begin{eqnarray}
  \cS\bigl(\Omega_{\text{\rm Hom}}\bigr)=(1-\nu)2^n+\nu\cS(\psi)\leq 2^n\,.
\end{eqnarray}
Since ${0<\nu\leq 1}$, we have $\cS(\psi) \leq 2^n$, which is the completeness constraint for homogeneous QSV protocols.
\end{proof}

For the positivity constraint, we note that the first inequality in Eq.~\eqref{eq:pos_hom} gives the radius of the local ball;  
while the second one is obtained by considering all possible target states.
In other words, using the homogeneous QSV protocol, any target state can be verified with an efficiency no more than $O(2^n)$ as long as the completeness constraint is satisfied. 
Moreover, the upper bound of this complexity depends on the structure of the Hilbert space, such that more complex measurements like the multi-outcome POVMs will not improve the upper bound; see Appendix~B3 below.

\subsection{Multi-outcome measurements}\label{AppendixB3}
Here we modify Corollary~\ref{cor:homQSV} to the multi-outcome scenario by considering $s$-outcome rank-$1$ POVMs as $\{P'_{\vec{r}_1},\cdots,P'_{\vec{r}_s}\}$, where the measurements $P'_{\vec{r}_i}=\frac{1}{s}(\openone+\vec{r}_i\cdot\vec{\sigma})$ with $\sum_i P'_{\vec{r}_i} =\openone$. 
Then we have the following corollary for QSV.
\begin{corollary}\label{cor:multi}
  Considering a homogeneous QSV protocol $\Omega_\text{Hom}$ as defined in Eq.~\eqref{eq:def_homoA} for the target state $\ket{\psi}$, it is local under $s$-outcome rank-$1$ POVMs ({$s\geq 2$}) if the following two constraints are satisfied:
\begin{itemize}
  \item {\bf Positivity}
  \begin{equation}\label{eq:pos_homA}
    \nu^{(s)} \leq \frac{1}{1-(4\pi/s)^n \min \bigl\{w^{(s)}(\psi)\bigr\}} \leq \frac{1}{2^{n-1}+1}\,.
  \end{equation}
  \item {\bf Completeness}
  \begin{equation}
    \cS^{(s)}(\psi) \leq s^n\,.
  \end{equation}
\end{itemize}
\end{corollary}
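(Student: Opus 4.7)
The plan is to adapt the argument behind Corollary~\ref{cor:homQSV} to the multi-outcome setting, where each local POVM element has the rank-$1$ form $P'_{\vec r}=\frac{1}{s}(\openone+\vec r\cdot\vec\sigma)$ rather than the projection $\frac{1}{2}(\openone+\vec r\cdot\vec\sigma)$. The first step is to fix a unique diagonal $P$-representation analogous to Eq.~\eqref{eq:def_w} by expanding $\Pi$ over tensor products of $P'_{\vec r}$ and truncating the spherical harmonics of each local factor to orders $0$ and $1$. Using $\int_{\cB}\D\cV\,\vec r=\vec 0$ and $\int_{\cB}\D\cV\,\vec r\vec r^{T}=\tfrac{4\pi}{3}\openone$, one obtains
\begin{equation}
  w^{(s)}_{\vec r_1,\ldots,\vec r_n}(\Pi)
  =\biggl(\frac{s}{8\pi}\biggr)^{\!n}
  \tr\Bigl[\Pi\bigotimes_{k=1}^{n}(\openone+3\vec r_k\cdot\vec\sigma)\Bigr]\,,
\end{equation}
which recovers Eq.~\eqref{eq:def_w} when $s=2$ and, in particular, yields the constant $w^{(s)}(\openone)=(s/(4\pi))^n$.

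Second, I would upgrade Theorem~\ref{thm:measure_c} to the $s$-outcome setting. The positivity requirement is unchanged in form, but the completeness bound changes from $2^n$ to $s^n$: integrating over the $n$ Bloch spheres now counts each POVM setting $s$ times per party---once for each POVM outcome---hence $s^n$ overall. The envelope $\tilde w^{(s)}$ is redefined as the pointwise maximum of $w^{(s)}$ over the $s^n$ tensor products of outcomes $\{\vec r_{k,j_k}\}_{j_k=1}^{s}$ belonging to a common local POVM on each party.

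With these two ingredients in place, applying the constraints to $\Omega_\text{Hom}=(1-\nu)\openone+\nu\ket{\psi}\bra{\psi}$ is linear. Positivity gives $(1-\nu)(s/(4\pi))^n+\nu\min\{w^{(s)}(\psi)\}\geq 0$, which rearranges to the first bound of Eq.~\eqref{eq:pos_homA}, while completeness reduces to $\cS^{(s)}(\psi)\leq s^n$ because $\cS^{(s)}(\openone)=s^n$ exactly. The universal second inequality follows from a direct spectral computation: each local factor $\openone+3\vec r_k\cdot\vec\sigma$ has eigenvalues $\{4,-2\}$, so the most negative eigenvalue of the tensor product is $-2\cdot 4^{n-1}$, giving $\min\{w^{(s)}(\psi)\}\geq-(s/(8\pi))^n\cdot 2\cdot 4^{n-1}$ and therefore $\nu^{(s)}\leq 1/(2^{n-1}+1)$, independent of $s$.

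The main obstacle is justifying the generalized completeness count $s^n$ together with the accompanying definition of $\tilde w^{(s)}$, since for $s\geq 3$ the POVM outcomes are no longer constrained to be antipodal pairs and may form arbitrary rank-$1$ configurations satisfying only $\sum_j P'_{\vec r_j}=\openone$. Translating the discrete setting-probability $\mu$ of Theorem~\ref{thm:measure_d} into its continuum limit with the correct $s^n$-fold overcounting demands careful bookkeeping of which $s$-tuples of Bloch directions constitute admissible local POVMs and how the envelope is to be taken over them. Once this combinatorial step is settled, the remainder of the argument simply reruns the proof of Corollary~\ref{cor:homQSV} with the new numerical factors.
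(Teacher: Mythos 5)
Your proposal is correct and follows essentially the same route as the paper: rescale the projective quasi-probability by $(s/2)^n$ so that $w^{(s)}(\openone)=s^n/(4\pi)^n$, apply positivity and completeness linearly to $\Omega_{\text{Hom}}$, and use the eigenvalues $\{4,-2\}$ of $\openone+3\vec{r}\cdot\vec{\sigma}$ to get the state-independent bound $1/(2^{n-1}+1)$; your intermediate value $\min\{w^{(s)}(\psi)\}=-(s/(8\pi))^n\cdot 2\cdot 4^{n-1}=-(1/2)(s/(2\pi))^n$ matches the paper's final expression (the paper's displayed denominator $(4\pi/s)^n$ there is a typo for $(8\pi/s)^n$). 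The combinatorial justification of the $s^n$ completeness count for non-antipodal outcome configurations, which you flag as the main obstacle, is likewise left implicit in the paper, which simply invokes the remark following Theorem~\ref{thm:measure_c}.
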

\begin{proof}
For the POVMs $\{P'_{\vec{r}_1},\cdots,P'_{\vec{r}_s}\}$, we generalize Corollary~\ref{cor:homQSV} by following the remark of Theorem~\ref{thm:measure_c}. 
Being equivalent to Eq.~\eqref{eq:int}, one has
\begin{eqnarray}
    \Pi= \int \D\cV_1\cdots \D\cV_n w^{(s)}_{\vec{r}_1,\cdots,\vec{r}_n}(\Pi) P'_{\vec{r}_1}\otimes\cdots\otimes P'_{\vec{r}_n}\,,
\end{eqnarray}
where the quasi-probability distribution is $w^{(s)}(\Pi)=(s/2)^n w(\Pi)$, and we have 
\begin{eqnarray}
  w^{(s)}(\openone)={s^n}/{(4\pi)^n}\,.
\end{eqnarray}

The positivity constraint is generalized to
\begin{eqnarray}
  \min_{\forall \vec{r}}\bigl\{w^{(s)}(\Omega_{\text{\rm Hom}})\bigr\}
  =\bigl(1-\nu^{(s)}\bigr){s^n}/{(4\pi)^n}+\nu\min_{\forall \vec{r}}\bigl\{w^{(s)}(\psi)\bigr\} \geq 0\,,
\end{eqnarray}
and the inequality $\nu^{(s)} \leq 1/\bigl[1-(4\pi/s)^n \min \bigl\{w^{(s)}(\psi)\bigr\}\bigr]$ can be deduced.
Note that the eigenvalues of $(\openone+3\vec{r}\cdot\vec{\sigma})$ are $4$ and $-2$, one obtains
\begin{eqnarray}
  \min \bigl\{w^{(s)}(\psi)\bigr\}=\bigl[4^{n-1}\times(-2)\bigr]/(4\pi/s)^n=(-1/2)(1/\pi)^n(s/2)^n\,,
\end{eqnarray}
then the second inequality in Eq.~\eqref{eq:pos_homA} follows.

With the completeness constraint of Eq.~\eqref{eq:cop_mea} in Theorem~\ref{thm:measure_c}, we have 
\begin{eqnarray}
  \cS^{(s)}(\Omega_{\text{\rm Hom}})
  =\bigl(1-\nu^{(s)}\bigr)\int\bigl[s^n/(4\pi)^n]\D\cV+\nu^{(s)}\cS^{(s)}(\psi)
  \leq s^n\,.
\end{eqnarray}
Since $0<\nu^{(s)}\leq 1$, we have $\cS^{(s)}(\psi) \leq s^n$, which is the completeness constraint.
\end{proof}

\section{Proof of Corollary~\ref{cor:Pauli}}\label{AppendixC}
\begin{proof}
  For the identity $\openone$, all the coefficients of the Pauli representation are $0$ except for $c_{0\cdots 0}=2^n$, thus 
  \begin{eqnarray}
    p_i(\openone)=3^{-n}, ~\forall i\,.
  \end{eqnarray}
  Using the positivity constraint of  Eq.~\eqref{eq:pos_d} in Theorem~\ref{thm:measure_d} on the homogeneous protocol $\Omega_{\text{\rm Hom}}$, one has 
  \begin{eqnarray}
  \min_i\bigl\{p_i(\Omega_\text{Hom})\bigr\}=(1\!-\!\nu)\frac{1}{3^n}+\nu\min_{i}\bigl\{p_i(\psi)\bigr\} \geq 0\,,\quad
  \end{eqnarray}
  and the first inequality in Eq.~\eqref{eq:pos_homP} is deduced.
  
 Considering the transformation $T_\text{Pauli}$ with the form of Eq.~\eqref{eq:def_T}, from Eq.~\eqref{eq:Rep_Proj} we get
  \begin{eqnarray}
    \min_i \bigl\{p_{i_1 \cdots i_n}(\psi)\bigr\} 
    &=& \frac{1}{2^n}\Bigl(\sum_\alpha q_\alpha(\psi) \Bigr)_{i_1 \cdots i_n}\,,\\
    \bigl(q_\alpha(\psi)\bigr)_{i_1 \cdots i_n}
    &=& \left\{
        \begin{aligned}
            & 0\,, &&           (i_r\neq 1,2\text{ for }\alpha_r=1)
                    \text{ or }(i_r\neq 3,4\text{ for }\alpha_r=2)
                    \text{ or }(i_r\neq 5,6\text{ for }\alpha_r=3)\,,\\
            & \pm c_\alpha C_n^h \frac{1}{3^h}\,, &&\text{otherwise}\,,\\
        \end{aligned}
    \right.
  \end{eqnarray}
  where $h$ is the Hamming weight of the string $i_1 \cdots i_n$. 
  Since $-1\leq c_{\alpha_1\cdots\alpha_n}\leq 1$, the minimal value of $p_i(\psi)$ is obtained when $|c_\alpha|=1$ such that $(q_\alpha(\psi))_i < 0, \forall \alpha$, except for $c_{0\cdots 0}=1$.
  Thus we have  
  \begin{eqnarray}
    \min_{\forall \psi} \min_i \bigl\{p_i(\psi)\bigr\}
    = \frac{1}{2^n}
    \biggl(C_n^0 \frac{1}{3^n} - C_n^{1}\frac{1}{3^{n-1}} - \cdots - C_n^n \frac{1}{3^0}\biggr) 
    = \frac{1}{2^n}
    \biggl(\frac{2}{3^n}-\biggl(\frac{1}{3}+1\biggr)^n\biggr)
    =\frac{2-4^n}{2^n 3^n}\,,
  \end{eqnarray}
  and the second inequality in Eq.~\eqref{eq:pos_homP} is derived.
  
  Using the positivity constraint of  Eq.~\eqref{eq:cop_d} in Theorem~\ref{thm:measure_d} on the homogeneous protocol $\Omega_{\text{\rm Hom}}$, we have 
  \begin{eqnarray}
    S(\Omega_\text{Hom})=(1-\nu)\sum_{i\in\{1,2,3\}^{\otimes n}} \frac{1}{3^n} + \nu S(\psi)\leq 1\,.
  \end{eqnarray}
  Since $0<\nu\leq 1$, $S(\psi) \leq 1$ follows.
\end{proof}

\section{Additional details on the applications}\label{AppendixD}
Here we present more details on the protocol design for verifying Bell states, stabilizer states including GHZ states, and $W$ states. 
From numerical results to concrete realizations, we also show additional procedures on how to improve all the results.

\subsection{Bell states}
For the first example, we consider the Bell state $\ket{\Phi^{+}}=\frac1{\sqrt{2}}\bigl(\ket{00}+\ket{11}\bigr)$.
Its optimal QSV protocol \cite{Pallister.etal2018} is precisely homogeneous such that
\begin{eqnarray}
  \Omega_{\text{Bell}}
  =\Omega_{\text{Hom}}\bigl(\Phi^{+}\bigr)
  =\frac{1}{3}\bigl(P_{XX}^{+}+P_{YY}^{-}+P_{ZZ}^{+}\bigr)
  =\frac{1}{3}\openone+\frac{2}{3}\ket{\Phi^{+}}\bra{\Phi^{+}}\,,~\quad
\end{eqnarray}
where $X$, $Y$, and $Z$ are the Pauli operators, and the superscripts $+$ and $-$ indicate the projections onto the eigenspaces with eigenvalues $+1$ and $-1$ respectively. 
The verification efficiency is given by $1/\nu=3/2$. 

As shown by Corollary~\ref{cor:Pauli}, the constraints are directly related to the quasi-probability distribution $p\bigl(\Phi^{+}\bigr)$ of the target state.
With the transformation of Eq.~\eqref{eq:def_T}, it is
\begin{eqnarray}
  p\bigl(\Phi^{+}\bigr)=\frac{1}{36}\left[
  \begin{array}{rrrr}
    10  &-8 &-8 &10\\
    1   &1  &1  &1\\
    1   &1  &1  &1\\
    1   &1  &1  &1\\
    -8  &10 &10 &-8\\
    1   &1  &1  &1\\
    1   &1  &1  &1\\
    1   &1  &1  &1\\
    10  &-8 &-8 &10\\
  \end{array}
  \right]\!,
\end{eqnarray}
where $p_{ij}$ represents the coefficient for the local Pauli projection $P_{(-1)^{j_1}i_1}\otimes P_{(-1)^{j_2}i_2}$.
One notices that as an entangled state, some coefficients of the quasi-probability distribution $p\bigl(\Phi^{+}\bigr)$ are negative under local measurements.
However, it does satisfy ${S\bigl(\Phi^{+}\bigr)=1}$, meaning that the Bell state $\ket{\Phi^{+}}$ can be verified with the homogeneous protocol using local Pauli measurement settings only. 
Since the radius of the local ball is given by $\nu=1/(1-3^n \min\{p(\Phi^{+})\})=1/3$, then the quasi-probability distribution of the homogeneous protocol is 
\begin{eqnarray}\label{eq:p_OBell}
  p\bigl(\Omega_\text{Hom}\bigl(\Phi^{+}\bigr)\bigr)=\frac{1}{12}\left[
  \begin{array}{rrrr}
    2   &0  &0  &2\\
    1   &1  &1  &1\\
    1   &1  &1  &1\\
    1   &1  &1  &1\\
    0   &2  &2  &0\\
    1   &1  &1  &1\\
    1   &1  &1  &1\\
    1   &1  &1  &1\\
    2   &0  &0  &2\\
  \end{array}
  \right]\!,
\end{eqnarray}
where all the coefficients are nonnegative, and the verification efficiency is $1/\nu=3$. 

Furthermore, one notices that $\sum_{j}P_{(-1)^{j_1}i_1}\otimes P_{(-1)^{j_2}i_2}=\openone,~\forall i$. Then, some settings of the above protocol, such as the second row in Eq.~\eqref{eq:p_OBell}, indicate that the homogeneous protocol $\Omega_{\text{Hom}}$ contains null operations $\openone$, hence can be improved. 
We add a revision process to delete the additional null operation $a\openone$ with an appropriate scaling factor $a$, thus in general we have 
\begin{eqnarray}\label{eq:revision}
  \Omega_\text{Hom}^{\prime}\bigl(\Phi^{+}\bigr)
  =\frac{\Omega_\text{Hom}\bigl(\Phi^{+}\bigr)-a\openone}{1-a}
  =\frac{1-\nu-a}{1-a}\openone + \frac{\nu}{1-a}\ket{\psi}\bra{\psi}
  =\bigl(1-\nu'\bigr)\openone + \nu' \ket{\psi}\bra{\psi}\,,
  \quad \text{with}\quad \nu'=\frac{\nu}{1-a}\,.
\end{eqnarray}
Hence, with the revision process, the improved protocol $\Omega_\text{Hom}^{\prime}$ for the Bell state is exactly the same as the optimal one, for which we have $a=1/2$, and the quasi-probability distribution is 
\begin{eqnarray}
  p\bigl(\Omega_\text{Hom}^{\prime}\bigl(\Phi^{+}\bigr)\bigr)=\frac{1}{3}\left[
  \begin{array}{rrrr}
    1   &0  &0  &1\\
    0   &0  &0  &0\\
    0   &0  &0  &0\\
    0   &0  &0  &0\\
    0   &1  &1  &0\\
    0   &0  &0  &0\\
    0   &0  &0  &0\\
    0   &0  &0  &0\\
    1   &0  &0  &1
  \end{array}
  \right]\!.
\end{eqnarray}

\subsection{Stabilizer states including GHZ states}
A standard approach for verifying stabilizer states \cite{Pallister.etal2018} is to construct the protocol by using all the stabilizer generators with equal weight, then the efficiency is given by $1/\nu=n$; or by using the full set of $2^n-1$ linearly independent stabilizers with equal weight and the efficiency is $1/\nu=(2^n-1)/2^{n-1}$.
For GHZ states in particular, an optimal homogeneous verification protocol can be constructed as 
\begin{eqnarray}
  \Omega_\text{GHZ} = \frac{1}{3}\biggl(P_0+\frac{1}{2^{n-2}}\sum_{\mathcal{Y}}P_\mathcal{Y}\biggr)=\frac{1}{3}\openone+\frac{2}{3}\ket{\text{GHZ}}\bra{\text{GHZ}}\,,
\end{eqnarray}
where $P_0$ represents the Pauli-$Z$ projection on each party with the same outcomes, and $P_\mathcal{Y}$ denotes all the possible combinations of the local Pauli-$X$ and Pauli-$Y$ on each party; see Ref.~\cite{Li.etal2020b} for more detailed descriptions.

Considering three-qubit GHZ state $\ket{\text{GHZ}_3}=\frac1{\sqrt{2}}(\ket{000}+\ket{111})$, the efficiency with stabilizers is $1/\nu=7/4$, and the optimal efficiency is $1/\nu=3/2$. 
Our method shows that it can be verified homogeneously with local Pauli projections as $S\bigl(\text{GHZ}_3\bigr)=1$. 
The efficiency is given by $1/\nu=17/4$.
In addition, for the $n$-qubit GHZ state we have the following corollary.
\begin{corollary}
    For the $n$-qubit GHZ state, there exists a homogeneous QSV protocol using Pauli projections with the transformation matrix as defined in Eq.~\eqref{eq:def_T}, of which we can achieve the efficiency 
    \begin{eqnarray}
        \frac{1}{\nu} = \frac{3^n+2^n-1}{2^n}\,.
    \end{eqnarray}
\end{corollary}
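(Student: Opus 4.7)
The plan is to apply Corollary~\ref{cor:Pauli} directly: compute the quasi-probability distribution $p_i(\text{GHZ}_n)$ of the GHZ projector under the transformation $T_\text{Pauli}$, identify $\min_i p_i$, and check the completeness constraint. Since Corollary~\ref{cor:Pauli} already ties $\nu$ to $\min_i p_i(\psi)$, the entire task reduces to locating this minimum for $\ket{\psi}=\ket{\text{GHZ}_n}$ and verifying $S(\text{GHZ}_n)\leq 1$.

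First, I would expand $\ket{\text{GHZ}_n}\bra{\text{GHZ}_n}$ in the Pauli basis. Writing the projector as its diagonal plus off-diagonal parts and rewriting $\ket{0}\bra{0}$, $\ket{1}\bra{1}$, $\ket{0}\bra{1}$, $\ket{1}\bra{0}$ in terms of $\openone, X, Y, Z$, the nonzero Pauli coefficients fall into two families: (a) $c_\alpha=+1$ for $\openone$-$Z$ strings with an even number of $Z$'s, and (b) $c_\alpha=(-1)^{k/2}$ for $X$-$Y$ strings with an even number $k$ of $Y$'s, yielding exactly $2^n$ nonzero terms.

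Second, I would push each such Pauli string through $T_\text{Pauli}$ of Eq.~\eqref{eq:def_T}. The key bookkeeping observation is that each term has very restrictive support on $i\in\{1,\ldots,6\}^n$: an $\openone$-$Z$ string with $Z$'s on positions $T$ contributes only when $i_r\in\{5,6\}$ for every $r\in T$, whereas an $X$-$Y$ string contributes only when every $i_r$ lies in the matching $X$- or $Y$-eigenbasis. I would then split into two regimes: if all $i_r\in\{1,\ldots,4\}$, then $p_i$ is the uniform identity background $1/3^n$ plus a single $X$-$Y$ contribution of magnitude $1/2^n$; if some $i_r\in\{5,6\}$, only $\openone$-$Z$ terms survive and $p_i$ becomes an alternating sum over even-weight subsets $T\subseteq\{r:i_r\in\{5,6\}\}$. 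In the first regime a sign flip makes the $X$-$Y$ contribution equal $-1$, yielding $p_i=(1-3^n)/(2^n\cdot 3^n)$; substituting this into the positivity bound of Corollary~\ref{cor:Pauli} gives $\nu=2^n/(3^n+2^n-1)$ and hence $1/\nu=(3^n+2^n-1)/2^n$.

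The main obstacle will be the second regime: one must control the alternating sum $\sum_{T\subseteq A_Z(i),\,|T|\text{ even}} (1/3)^{n-|T|}\prod_{r\in T}(\pm 1)$ using the identity $\sum_{|T|\text{ even}}\prod x_r=\tfrac12\bigl(\prod(1+x_r)+\prod(1-x_r)\bigr)$, and then check that for every choice of $A_Z(i)$ and signs the resulting $p_i$ remains at least $(1-3^n)/(2^n\cdot 3^n)$, so that the minimum identified in the first regime is genuinely global. Once this is done, the completeness constraint $S(\text{GHZ}_n)\leq 1$ follows by summing $\max_j p_{i,j}$ over the $3^n$ local-basis choices, where each basis choice contributes from at most one nonzero Pauli string plus the uniform identity background.
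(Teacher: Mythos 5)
Your plan follows the same route as the paper's proof: expand $\ket{\text{GHZ}_n}\bra{\text{GHZ}_n}$ in the Pauli basis, push each string through $T_\text{Pauli}$, locate $\min_i p_i(\psi)$, and invoke Corollary~\ref{cor:Pauli}. Your Pauli bookkeeping is in fact more careful than the paper's: you correctly include the $X$--$Y$ strings containing $Y$'s with coefficients $(-1)^{k/2}$, which the paper's enumeration omits, and you correctly note that in the all-$X/Y$ sector exactly one such string contributes, so that a sign flip yields $p_i=(1-3^n)/(2^n3^n)$ and hence the stated $1/\nu=(3^n+2^n-1)/2^n$.

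The genuine gap is exactly the step you defer as ``the main obstacle,'' and it is not a formality. For a setting whose $Z$-basis positions form the set $A_Z$ with outcome signs $s_r=\pm1$, your even-subset identity gives
\begin{equation*}
  p_i \;=\; \frac{1}{2^{n+1}3^n}\Bigl[\textstyle\prod_{r\in A_Z}(1+3s_r)+\prod_{r\in A_Z}(1-3s_r)\Bigr]
  \;=\;\frac{4^a(-2)^b+(-2)^a4^b}{2^{n+1}3^n}\,,
\end{equation*}
where $a$ and $b$ count the positive and negative signs. For $n\le 5$ one can check case by case that this never drops below $(1-3^n)/(2^n3^n)$, so the first-regime minimum is global and the corollary holds. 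But for the all-$Z$ setting with $(a,b)=(n-1,1)$ and $n$ even, the value is $-(4^{n-1}+2^n)/(2^n3^n)$, which is \emph{below} $(1-3^n)/(2^n3^n)$ as soon as $4^{n-1}+2^n>3^n-1$, i.e.\ for all $n\ge 6$ (e.g.\ $n=6$: $-1088$ versus $-728$ in units of $1/(2^63^6)$, giving $1/\nu=18$ rather than $792/64$). So the minimum you identify is not global for $n\ge6$ and the claimed efficiency is not attained there; your plan can only be completed for $n\le5$. Note that the paper's own proof silently asserts $p_i=\tfrac{1}{2^n}(\tfrac{1}{3^n}\pm1)$ on $\{5,6\}^{\otimes n}$ and $\tfrac{1}{2^n}\cdot\tfrac{1}{3^n}$ elsewhere, which is incorrect in precisely this sector, so you have put your finger on the unjustified step rather than missed one. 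Separately, your closing remark that each setting receives ``at most one nonzero Pauli string plus the uniform background'' is false for $Z$-heavy settings (which receive up to $2^{|A_Z|-1}$ strings), so the completeness check $S(\text{GHZ}_n)\le1$ also needs the full even-subset sum; the paper only verifies it numerically for small $n$.
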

\begin{proof}
    Using the three-qubit GHZ state as an example, we can find the coefficients of the Pauli representation with the following property
    \begin{eqnarray}
        c_{000} &=& \tr \bra{\text{GHZ}_3}\sigma_0 \otimes \sigma_0 \otimes \sigma_0 \ket{\text{GHZ}_3} = 1\,,\nonumber\\
        c_{001} &=& \tr \bra{\text{GHZ}_3}\sigma_0 \otimes \sigma_0 \otimes \sigma_1 \ket{\text{GHZ}_3} = 0\,,\nonumber\\
        c_{002} &=& \tr \bra{\text{GHZ}_3}\sigma_0 \otimes \sigma_0 \otimes \sigma_2 \ket{\text{GHZ}_3} = 0\,,\nonumber\\
        c_{003} &=& \tr \bra{\text{GHZ}_3}\sigma_0 \otimes \sigma_0 \otimes \sigma_3 \ket{\text{GHZ}_3} = 0\,,\nonumber\\
        \cdots && \nonumber\\
        c_{333} &=& \tr \bra{\text{GHZ}_3}\sigma_3 \otimes \sigma_3 \otimes \sigma_3 \ket{\text{GHZ}_3} = 0\,.
    \end{eqnarray}
    Since $\sigma_2=\I\sigma_1 \sigma_3$, for the $n$-qubit GHZ state, besides the coefficient $c_{\{0\}^{\otimes n}}$ of the identity operator, only the coefficients $c_{\{1\}^{\otimes n}}$ and $c_{\{0,3\}^{\otimes n}}$ with even Hamming weight are $1$ and others are all zero. Thus, with Eq.~\eqref{eq:Rep_Proj} and the transformation Eq.~\eqref{eq:def_T}, for $i_1 \cdots i_n \in \{1,2,3,4,5,6\}^{\otimes n}$, we have 
    \begin{eqnarray}
        p_{i_1 \cdots i_n}({\text{GHZ}_3}) = \frac{1}{2^n}
        \left\{
        \begin{aligned}
            & \frac{1}{3^n}\pm 1\,, &&  i_1\cdots i_n \in \{1,2\}^{\otimes n}\text{ or }i_1\cdots i_n \in \{5,6\}^{\otimes n}\,,\\
            & \frac{1}{3^n}\,, &&\text{otherwise}\,.\\
        \end{aligned}
    \right.
    \end{eqnarray}
    Combining Corollary~\ref{cor:Pauli}, we have
    \begin{eqnarray}
        \frac{1}{\nu} = 1 - 3^n \min\{p_i({\text{GHZ}_3})\} = 1 - 3^n \frac{1}{2^n}\left(\frac{1}{3^n}-1\right) = \frac{3^n+2^n-1}{2^n}\,.
    \end{eqnarray}
\end{proof}

Furthermore, this efficiency can be improved to $1/\nu=5/3$ with an additional revision process.
This is better than that in Ref.~\cite{Pallister.etal2018}, and slightly worse than the optimal one.
The quasi-probability distribution is 
\begin{eqnarray}
  p\bigl(\Omega_\text{Hom}\bigl(\text{GHZ}_3\bigr)\bigr)=\frac{1}{20}\left[
  \begin{array}{rrrrrrrr}
    3 &0  &0  &3  &0  &3  &3  &0  \\
    0 &0  &0  &0  &0  &0  &0  &0  \\
    0 &0  &0  &0  &0  &0  &0  &0  \\
    0 &0  &0  &0  &0  &0  &0  &0  \\
    0 &3  &3  &0  &3  &0  &0  &3  \\
    0 &0  &0  &0  &0  &0  &0  &0  \\
    0 &0  &0  &0  &0  &0  &0  &0  \\
    0 &0  &0  &0  &0  &0  &0  &0  \\
    1 &0  &0  &1  &1  &0  &0  &1  \\
    0 &0  &0  &0  &0  &0  &0  &0  \\
    0 &3  &3  &0  &3  &0  &0  &3  \\
    0 &0  &0  &0  &0  &0  &0  &0  \\
    0 &3  &3  &0  &3  &0  &0  &3  \\
    0 &0  &0  &0  &0  &0  &0  &0  \\
    0 &0  &0  &0  &0  &0  &0  &0  \\
    0 &0  &0  &0  &0  &0  &0  &0  \\
    0 &0  &0  &0  &0  &0  &0  &0  \\
    1 &0  &0  &1  &1  &0  &0  &1  \\
    0 &0  &0  &0  &0  &0  &0  &0  \\
    0 &0  &0  &0  &0  &0  &0  &0  \\
    1 &0  &1  &0  &0  &1  &0  &1  \\
    0 &0  &0  &0  &0  &0  &0  &0  \\
    0 &0  &0  &0  &0  &0  &0  &0  \\
    1 &0  &1  &0  &0  &1  &0  &1  \\
    1 &1  &0  &0  &0  &0  &1  &1  \\
    1 &1  &0  &0  &0  &0  &1  &1  \\
    2 &0  &0  &0  &0  &0  &0  &2  
  \end{array}
  \right]\!.
\end{eqnarray}

Furthermore, with the fact that all stabilizer states can be verified by QSV protocols constructed with their stabilizers which are in the Pauli group \cite{Pallister.etal2018}, our method is able to give the local homogeneous QSV protocols for all stabilizer states with the quasi-probability distribution based on the Pauli representation. 
This can be shown numerically such that we have checked all the graph states up to five qubits (which are equivalent to stabilizer states).

\subsection{$W$ states}
$W$ states (or more generally, Dicke states) have been efficiently verified in our previous work \cite{Liu.etal2019b} using only local Pauli-$Z$ and Pauli-$X$ measurements. 
The efficiency is $1/\nu=n-1$ for $n\geq 4$ ($1/\nu=3$ for $n=3$) with adaptive measurements and is worsened by a factor of 2 with nonadaptive measurements. 
In addition, Li. \textit{et al.} \cite{Li.etal2020a} proposed a nearly optimal protocol with the efficiency of $1/\nu=8/5$, which is also homogeneous. 
However, besides the Pauli-$X$ and Pauli-$Z$ measurements, their protocol requires an additional projection on $(2\ket{0}\pm\ket{1})/\sqrt{5}$ as well as certain symmetrization procedures.

Consider the three-qubit $W$ state $\ket{W_3}=\frac1{\sqrt{3}}(\ket{001}+\ket{010}+\ket{100})$. Unfortunately, one has $S\bigl(W_3\bigr)=1.40(7)$ by using our method, which violates the completeness constraint. 
In turn, the revision process cannot make the constraint be satisfied either.
The revised quasi-probability distribution is 
\begin{eqnarray}
  p\bigl(\Omega_\text{Hom}(W_3)\bigr)=\frac{1}{222}\left[
  \begin{array}{rrrrrrrr}
    12  &0  &0  &0  &0  &0  &0  &12\\
    6   &6  &0  &0  &0  &0  &6  &6\\
    24  &5  &0  &17 &0  &17 &24 &5\\
    6   &0  &6  &0  &0  &6  &0  &6\\
    6   &0  &0  &6  &6  &0  &0  &6\\
    1   &0  &1  &0  &1  &0  &1  &0\\
    24  &0  &5  &17 &0  &24 &17 &5\\
    1   &1  &0  &0  &1  &1  &0  &0\\
    2   &4  &4  &0  &2  &4  &4  &0\\
    6   &0  &0  &6  &6  &0  &0  &6\\
    6   &0  &6  &0  &0  &6  &0  &6\\
    1   &0  &1  &0  &1  &0  &1  &0\\
    6   &6  &0  &0  &0  &0  &6  &6\\
    12  &0  &0  &0  &0  &0  &0  &12\\
    24  &5  &0  &17 &0  &17 &24 &5\\
    1   &1  &0  &0  &1  &1  &0  &0\\
    24  &0  &5  &17 &0  &24 &17 &5\\
    2   &4  &4  &0  &2  &4  &4  &0\\
    24  &0  &0  &24 &5  &17 &17 &5\\
    1   &1  &1  &1  &0  &0  &0  &0\\
    2   &4  &2  &4  &4  &0  &4  &0\\
    1   &1  &1  &1  &0  &0  &0  &0\\
    24  &0  &0  &24 &5  &17 &17 &5\\
    2   &4  &2  &4  &4  &0  &4  &0\\
    2   &2  &4  &4  &4  &4  &0  &0\\
    2   &2  &4  &4  &4  &4  &0  &0\\
    0   &32 &32 &4  &32 &4  &4  &24
  \end{array}
  \right]\!,
\end{eqnarray}
with $S\bigl(\Omega_\text{Hom}(W_3)\bigr)=1.19(8)>1$. 

On the other hand, since we have the quasi-probability distribution, further analysis is still meaningful.
We find that not only the operator $\Omega_\text{Hom}(W_3)$ does not satisfy the constraints, but some of the measurement settings do not fulfill $\tr\bigl(\Omega_i\ket{W_3}\bra{W_3}\bigr)=1$. 
Then, we pick out the settings which do satisfy $\tr\bigl(\Omega_i\ket{W_3}\bra{W_3}\bigr)=1$, and use them to construct a verification protocol with a uniform probability distribution. 
Hence, we have 
\begin{eqnarray}
  \Omega'(W_3)=\frac{1}{13}\left[
  \begin{array}{rrrrrrrr}
    9   & 0   & 0   &0   & 0   &0    &0    &0\\
    0   &11   & 1   &0   & 1   &0    &0    &0\\
    0   & 1   &11   &0   & 1   &0    &0    &0\\
    0   & 0   & 0   &9   & 0   &0    &0    &0\\
    0   & 1   & 1   &0   &11   &0    &0    &0\\
    0   & 0   & 0   &0   & 0   &9    &0    &0\\
    0   & 0   & 0   &0   & 0   &0    &9    &0\\
    0   & 0   & 0   &0   & 0   &0    &0    &6
  \end{array}
  \right]\!,
\end{eqnarray}
which is not homogeneous.
However, it is a valid protocol for verifying the three-qubit $W$ state with the efficiency given by $1/\nu=13/3$, which is better than that of the previous inhomogeneous protocol with Pauli projections \cite{Liu.etal2019b}. 

\subsection{Different choices of the transformation $T_\text{Pauli}$}
With the specific choice of the transformation between the Pauli operators $\{\sigma_0,\sigma_1,\sigma_2,\sigma_3\}$ and the Pauli projections $\{P_i^0,P_i^1\}_{i=1}^3$ as in Eq.~\eqref{eq:def_T}
\begin{eqnarray}
  T_\text{Pauli}:=
  \begin{bmatrix}
    \vec{t}_0\\
    \vec{t}_1\\
    \vec{t}_2\\
    \vec{t}_3
  \end{bmatrix}
  =
  \begin{bmatrix}
    1/3   &1/3    &1/3    &1/3    &1/3    &1/3  \\
    1     &-1     &0      &0      &0      &0    \\
    0     &0      &1      &-1     &0      &0    \\
    0     &0      &0      &0      &1      &-1   \\
  \end{bmatrix}
  \!,
\end{eqnarray}
we get Corollary~\ref{cor:Pauli} as well as all the results of the previous applications. 
As mentioned in the main text, the choice of $\vec{t}_0$ is arbitrary such that 
\begin{eqnarray}
  \openone = P_1^0+P_1^1 = P_2^0+P_2^1 = P_3^0+P_3^1 = \sum_i\alpha_i (P_i^0 + P_i^1)\,,
\end{eqnarray}
where $\sum_i \alpha_i =1$.
Thus, in general, one has 
\begin{eqnarray}
  \vec{t}_0 = [\alpha_1~\alpha_1~\alpha_2~\alpha_2~\alpha_3~\alpha_3]\,.
\end{eqnarray}
Obviously, Corollary~\ref{cor:Pauli} is not valid anymore with a different transformation. 
Reconsidering Theorem~\ref{thm:measure_d}, we have 
\begin{eqnarray}
  \min_i\bigl\{p_i(\Omega_\text{Hom})\bigr\}=\min_i\bigl\{(1\!-\!\nu)p_i(\openone)+\nu p_i(\psi)\bigr\} \geq 0\,,\quad
\end{eqnarray}
with $p_{i_1 \cdots i_n}(\openone)=\bigl(\vec{t}_0^{\otimes n}\bigr)_{i_1 \cdots i_n}$. Under such a circumstance, it is difficult to give a general bound for $\nu$.
However, one finds that, the revision process in Eq.~\eqref{eq:revision} does not require $a>0$. 
Hence, we get
\begin{eqnarray}
  \Omega_\text{Hom}(\psi)
  =\frac{\ket{\psi}\bra{\psi}-a\openone}{1-a}
  =\frac{-a}{1-a}\openone + \frac{1}{1-a}\ket{\psi}\bra{\psi}
  =(1-\nu)\openone + \nu \ket{\psi}\bra{\psi}\,,
  \quad \text{with}\quad \nu=\frac{1}{1-a}\,.
\end{eqnarray}

Taking the three-qubit GHZ state $\ket{\text{GHZ}_3}=\frac1{\sqrt{2}}\bigl(\ket{000}+\ket{111}\bigr)$ as an example, with the transformation $\vec{t}_0=[0~~0~~0~~0~~1~~1]$, our method along with the revision process is able to give a local homogeneous protocol with a better efficiency of $1/\nu=3/2$, which happens to be the optimal one \cite{Li.etal2020b}. 

More importantly, for the three-qubit $W$ state $\ket{W_3}=\frac1{\sqrt{3}}\bigl(\ket{001}+\ket{010}+\ket{100}\bigr)$, with the transformation $\vec{t}_0=[0~~0~~0~~0~~1~~1]$, we have the quasi-probability distribution
\begin{eqnarray}
  p\bigl(\Omega_\text{Hom}(W_3)\bigr)=\frac{1}{12}\left[
  \begin{array}{rrrrrrrr}
    0  &0  &0  &0  &0  &0  &0  &0\\
    0  &0  &0  &0  &0  &0  &0  &0\\
    2  &1  &0  &1  &0  &1  &2  &1\\
    0  &0  &0  &0  &0  &0  &0  &0\\
    0  &0  &0  &0  &0  &0  &0  &0\\
    0  &0  &0  &0  &0  &0  &0  &0\\
    2  &0  &1  &1  &0  &2  &1  &1\\
    0  &0  &0  &0  &0  &0  &0  &0\\
    0  &0  &0  &0  &0  &0  &0  &0\\
    0  &0  &0  &0  &0  &0  &0  &0\\
    0  &0  &0  &0  &0  &0  &0  &0\\
    0  &0  &0  &0  &0  &0  &0  &0\\
    0  &0  &0  &0  &0  &0  &0  &0\\
    0  &0  &0  &0  &0  &0  &0  &0\\
    2  &1  &0  &1  &0  &1  &2  &1\\
    0  &0  &0  &0  &0  &0  &0  &0\\
    2  &0  &1  &1  &0  &2  &1  &1\\
    0  &0  &0  &0  &0  &0  &0  &0\\
    2  &0  &0  &2  &1  &1  &1  &1\\
    0  &0  &0  &0  &0  &0  &0  &0\\
    0  &0  &0  &0  &0  &0  &0  &0\\
    0  &0  &0  &0  &0  &0  &0  &0\\
    2  &0  &0  &2  &1  &1  &1  &1\\
    0  &0  &0  &0  &0  &0  &0  &0\\
    0  &0  &0  &0  &0  &0  &0  &0\\
    0  &0  &0  &0  &0  &0  &0  &0\\
    0  &2  &2  &0  &2  &0  &0  &0
  \end{array}
  \right]\!,
\end{eqnarray}
with $S\bigl(\Omega_\text{Hom}(W_3)\bigr)=7/6>1$. 
However, considering adaptive measurements as in Theorem~\ref{thm:measure_a}, the constraint can be satisfied such that $S\bigl(\Omega_\text{Hom}(W_3)\bigr)=7/6<4$. 
Therefore, we are able to get a homogeneous QSV protocol for $\ket{W_3}$ using local Pauli projections as
\begin{eqnarray}
    \Omega_\text{Hom}(W_3)=\sum_k \frac{1}{3} \cP_k\biggl\{
    P_Z^+\Bigl[\frac{1}{2}P_{XX}^+ + \frac{1}{2}P_{YY}^+\Bigr]
    +P_Z^- \Bigl[\frac{1}{2}\openone\openone+\frac{1}{2}P_{Z}^+P_{Z}^+\Bigr]\biggr\}\,,
\end{eqnarray}
where $\cP_k\{\cdot\}(k=1,2,3)$ is the permutation of qubits. This gives us a better verification protocol with an efficiency of $1/\nu =2$. It is better than the previous protocol using adaptive local Pauli projections, which is also not homogeneous \cite{Liu.etal2019b}. 
Compared with the nearly optimal homogeneous protocol with the efficiency of $1/\nu=8/5$ which demands an additional projection on $(2\ket{0}\pm\ket{1})/\sqrt{5}$ as well as certain symmetrization procedures \cite{Li.etal2020a}, our new homogeneous protocol is much easier to realize.

\end{document}